\newtheorem{definition}{Definition}
\newtheorem{lemma}{Lemma}
\newtheorem{theorem}{Theorem}
\newcommand{\diag}{\text{diag}}
\newcommand{\trace}{\text{tr}}
\newcommand{\norm}[1]{\left\lVert#1\right\rVert}
\newcommand{\nosemic}{\renewcommand{\@endalgocfline}{\relax}}
\newcommand{\dosemic}{\renewcommand{\@endalgocfline}{\algocf@endline}}
\begin{document}
%
\title{Fairness and Sum-Rate Maximization via Joint Subcarrier and Power Allocation in Uplink SCMA Transmission}
%
%
%

\newcommand{\arxiv}{}
\newcommand{\modulus}{\text{ mod }}

\author{Joao~V.C.~Evangelista,~\IEEEmembership{IEEE Student Member,}
        Zeeshan~Sattar,~\IEEEmembership{IEEE Student Member,} 		~Georges~Kaddoum,~\IEEEmembership{IEEE Member,} and Anas Chaaban,~\IEEEmembership{IEEE Member}
\ifdefined \arxiv
\thanks{
\textsuperscript{\textcopyright} 2019 IEEE.  Personal use of this material is permitted.  Permission from IEEE must be obtained for all other uses, in any current or future media, including reprinting/republishing this material for advertising or promotional purposes, creating new collective works, for resale or redistribution to servers or lists, or reuse of any copyrighted component of this work in other works.

J.V.C. Evangelista, G. Kaddoum and Z. Sattar  were with the Department
of Electrical Engineering, École de Technologie Supérieure, Montreal,
QC, H3C 1K3 CA, e-mail: \{joao-victor.de-carvalho-evangelista.1, zeeshan.sattar.1\}@ens.etsmtl.ca and georges.kaddoum@etsmtl.ca 

Anas Chaaban is with the School of Engineering, University of British Columbia, Kelowna, BC, V1V 1V7 CA, email: anas.chaaban@ubc.ca} 
\else 
\thanks{J.V.C. Evangelista, G. Kaddoum and Z. Sattar  were with the Department
of Electrical Engineering, École de Technologie Supérieure, Montreal,
QC, H3C 1K3 CA, e-mail: \{joao-victor.de-carvalho-evangelista.1, zeeshan.sattar.1\}@ens.etsmtl.ca and georges.kaddoum@etsmtl.ca 

Anas Chaaban is with the School of Engineering, University of British Columbia, Kelowna, BC, V1V 1V7 CA, email: anas.chaaban@ubc.ca} 
\fi
} 
\maketitle

\begin{abstract}
In this work, we consider a sparse code multiple access uplink system, where $J$ users simultaneously transmit data over $K$ subcarriers, such that $J > K$, with a constraint on the power transmitted by each user. To jointly optimize the subcarrier assignment and the transmitted power per subcarrier, two new iterative algorithms are proposed, the first one aims to maximize the sum-rate (Max-SR) of the network, while the second aims to maximize the fairness (Max-Min). In both cases, the optimization problem is of the mixed-integer nonlinear programming (MINLP) type, with non-convex objective functions, which are generally not tractable. We prove that both joint allocation problems are NP-hard. To address these issues, we employ a variant of the block successive upper-bound minimization (BSUM) \cite{razaviyayn.2013} framework, obtaining polynomial-time approximation algorithms to the original problem. Moreover, we evaluate the algorithms' robustness against outdated channel state information (CSI), present an analysis of the convergence of the algorithms, and a comparison of the sum-rate and Jain's fairness index of the novel algorithms with three other algorithms proposed in the literature. The Max-SR algorithm outperforms the others in the sum-rate sense, while the Max-Min outperforms them in the fairness sense.
\end{abstract}

\begin{IEEEkeywords}
SCMA, 5G, Power Allocation, Multiple Access.
\end{IEEEkeywords}
%
\IEEEpeerreviewmaketitle
\section{Introduction}
\label{sec:intro}
%
%
%
%
\IEEEPARstart{T}{he} fifth generation (5G) of wireless networks is expected to deliver better coverage and a higher capacity to massively connected users. One of the fundamental aspects to achieve this goal is the design of multiple access techniques. 
Orthogonal multiple access (OMA) techniques allocate different users into orthogonal network resources, to minimize the interference between users. For instance, time division multiple access (TDMA), code division multiple access (CDMA) and orthogonal frequency division multiple access (OFDMA) assign orthogonal time slots, codes and subcarriers to users, respectively. However, due to the increasing demand for data communications and the introduction of new data-hungry technologies, such as virtual and augmented reality (VR/AR) and massively deployed internet of things (IoT) devices, a tenfold increase in traffic is expected by 2020 \cite{ericsson}. As the number of orthogonal network resources available is finite, this design paradigm is incompatible with the massive traffic and connectivity requirements of 5G networks. Recently, early information-theoretic works on multi-user communications \cite{cover.1972,ahlswede.1973} have reemerged under the name non-orthogonal multiple access (NOMA) as a potential solution to deal with this requirement. Although NOMA methods are rooted in the information-theoretic literature, the recent interest has been focused on communication-theoretic aspects such as developing efficient NOMA coding and modulation schemes, with desired error-rate performance and multi-user communication capabilities. Differently from OMA, in NOMA techniques, multiple users are allocated to the same network resources, permitting the allocation of more users and more efficient use of the available resources. In NOMA, each receiver must perform multiuser detection (MUD) to recover the intended transmitted signal. NOMA techniques can be classified into two different groups, power division NOMA (PD-NOMA), code division NOMA (CD-NOMA). Recently, power domain sparse code multiple access (PSMA) \cite{moltafet.2018a} has been proposed as a hybrid of PD-NOMA and CD-NOMA. An extensive performance comparison of NOMA methods in a single cell system is found in \cite{wang.2015}, while the comparison between PD-NOMA and CD-NOMA in heterogeneous network is presented in \cite{moltafet.2018a}

In CD-NOMA, the transmitter introduces redundancy to the transmitted symbol, via code and/or spreading, to enable receivers to perform MUD and separate signals from different users. Furthermore, CD-NOMA has additional advantages in comparison to PD-NOMA \cite{moltafet.2018}, such as the coding gain and the shaping gain (i.e., methods using multidimensional constellations) \cite{nikopour.2013}.  
Motivated by these advantages, this paper is focused on one of the promising CD-NOMA techniques, named sparse code multiple access (SCMA) \cite{nikopour.2013}. In SCMA, sparse multidimensional codebooks are assigned to each user, and each user's data layer is sparsely spread throughout the network resources. In comparison to OFDMA, SCMA allows for more users than subcarriers available to be served simultaneously, while reducing the peak average power ratio (PAPR) due to the sparsity of the subcarrier allocation. SCMA was first proposed in \cite{nikopour.2013}, as a multidimensional generalization of the low density spreading code division multiple access (LDS-CDMA) that yielded better results regarding detection error. In \cite{taherzadeh.2014}, a method to design SCMA codebooks based on lattice coding was proposed. In \cite{nikopour.2014}, a downlink SCMA system is considered, and an algorithm for user pairing along with rate adjustment and a detection strategy is proposed for a multiuser SCMA scheme. It is shown that this scheme can achieve robustness to mobility and high data rates. 

\subsection{Related Work}
\label{sec:related}
Regarding resource management and allocation in SCMA networks, an algorithm to maximize the rate of successful accesses on a random access massive machine communications network is suggested in \cite{xue.2016}. In \cite{luo.2017}, a resource allocation and subcarrier pairing scheme combining OFDMA and SCMA for a dual-hop multiuser relay network is proposed. The problem of assigning SCMA subcarriers to maximize sum rate in uplink transmission is formulated as matching game in \cite{di.2016}. A grant-free contention based uplink SCMA scheme was proposed in \cite{au.2014}. In \cite{li.2016}, the capacity of an SCMA cell with a Gaussian input is derived and a joint subcarrier and power allocation algorithm is proposed. In \cite{dabiri.2018}, three algorithms for dynamic subcarrier allocation are presented and their link-level performance is evaluated, one of which takes user fairness in consideration. However, the system-level capacity of these algorithms is not investigated and their fairness is compared in terms of the bit error rate (BER) difference between the best and the worst user. In \cite{cui.2017}, a low complexity bisection-based power allocation algorithm, aiming to maximize the capacity of the SCMA system with a finite alphabet is proposed. A stochastic geometry framework to obtain the system-wide area spectral efficiency of underlaid and overlaid device-to-device (D2D) SCMA networks is developed in \cite{liu.2017} and a power allocation strategy to minimize cross-tier interference in underlaid mode and an optimal subcarrier allocation for the overlaid mode are presented. In \cite{zhu.2017}, a joint subcarrier and power allocation algorithm to maximize the proportional fairness utility function of the downlink SCMA system is proposed. The subcarrier and power allocation are split into two problems. The power allocation problem is transformed to a convex equivalent and the remove-and-reallocate algorithm is proposed to solve the combinatorial subcarrier problem. A similar technique of convexification and alternating optimization is employed in \cite{abedi.2018} to solve a SCMA resource allocation problem taking into account content caching, energy harvesting and physical layer security.

Despite the extensive body of literature regarding the analytical characterization and resource management in SCMA, a few core issues are yet to be properly addressed. Firstly, the network overloading achieved by non-orthogonal scheduling also results in an additional source of interference. Hence, it is fundamental to approach the resource allocation problem from a fairness perspective, as algorithms that maximize the sum-rate do so at the expense of users with poor channel condition. In this vein, one of the algorithms proposed in this paper follows a fairness maximization path. Secondly, as shown in this manuscript, the joint subcarrier and power allocation problem is NP-hard. The algorithms currently proposed in the literature propose heuristics to achieve sub-optimal solutions. In face of that, we propose a more systematic approach by relaxing the problem and following the BSUM framework. 

To the best of our knowledge, no previous works have investigated the fairness in joint subcarrier and power allocation in uplink SCMA transmission in depth. Furthermore, the algorithms proposed in this paper have stronger optimality guarantees in comparison with algorithms proposed in previous works.

\subsection{Contributions}
\label{sec:contrib}
In this paper, we formulate two optimization problems for joint subcarrier and power allocation in SCMA networks, one aiming to maximize the sum-rate and another one for maximizing the fairness and propose two algorithms to solve them. The first algorithm's goal is to maximize the sum-rate of the network. While this is an essential criterion in cellular networks, fairness between users is equally important. Thus, to include fairness in the optimization, we propose the Max-Min algorithm aimed to maximize the minimum rate among the users. The obtained results demonstrate better performance than the former algorithm in terms of fairness, at the cost of a lowered sum rate.

Both problems are of the non-convex mixed integer nonlinear programming (MINLP) type. We prove that both problems are NP-hard \footnote{As shown in \cite{moltafet.2018b, mokdad.2019} for the energy-efficiency and heterogeneous cloud radio access networks in PD-NOMA networks respectively, this problems can be reformulated as a monotonic optimization problem, and the optimal joint allocation can be found using the polyblock outer-approximation algorithm, albeit, the algorithm complexity grows exponentially with the size of the problem.}. Then, we propose two algorithms based on the BSUM framework, proposed in \cite{razaviyayn.2013}. The proposed algorithms maximize a lower bound approximation of the objective functions by updating the optimization variables in blocks. As shown in \cite{razaviyayn.2013}, if the lower bound approximation satifies some conditions, this approach has guaranteed convergence to a stationary point, assuring a locally optimal solution. Additionally, we compare both algorithms and the ones proposed in \cite{dabiri.2018} in the sum-rate and the Jain's fairness index sense. Results show that the Max-SR algorithm outperforms all other algorithms regarding sum-rate, while the Max-Min algorithm outperforms all others regarding fairness. Furthermore, we evaluate the fairness and sum-rate performance of the algorithms under outdated CSI. Finally, we compare the BER performance of the two proposed algorithms. 

To summarize, the list below presents the main accomplishments in this work:
\begin{itemize}
\item We prove that the joint power and subcarrier allocation problem is NP-hard.
\item We propose a Max-SR algorithm which achieves a better sum-rate in comparison to the ones proposed in \cite{dabiri.2018}.
\item We propose a Max-Min algorithm which achieves better fairness, in terms of the Jain's fairness index, in comparison to the ones proposed in \cite{dabiri.2018}.
\item We evaluate the robustness of the algorithms against outdated channel state information (CSI)
\end{itemize}

This paper is organized as follows: 
Section \ref{sec:sys_model} contains a brief overview of the SCMA encoder and decoder structure. Also, a description of SCMA signals, and the derivation of its sum-rate is presented. In Section \ref{sec:analysis}, the optimization problems are formulated, and an algorithm for sum-rate maximization, and, another for fairness maximization are proposed. Furthermore, in Section \ref{sec:results} numerical results are shown, and the performance of the algorithm is evaluated. Also, a numerical analysis of the convergence is presented. Finally, in Section \ref{sec:conclusions} the conclusions are presented.

\subsection{Notation}
\label{sec:notation}
Throughout this paper, italic lowercase letters denote real and complex scalar values, and $x^*$ denotes the complex conjugate of $x$. Lower case boldface letters denote vectors, while upper case boldface denote matrices. A lowercase letter with one subscript, $x_i$, represents the $i$-th element of the vector $\mathbf{x}$, while both $x_{i,j}$ and $[\mathbf{X}]_{i, j}$ are used to denote the element on the $i$-th row and $j$-th column of matrix $\mathbf{X}$. The operators $\mathbf{x}^H$ and $\mathbf{X}^H$ denote the hermitian conjugate of a vector and of a matrix, respectively. The operator $\det(\mathbf{X})$ is the determinant of the square matrix $\mathbf{X}$ and $\trace(\mathbf{X})$ is its trace. The operator $\diag(\mathbf{x})$ denotes a square matrix with its diagonal components given by $\mathbf{x}$. The operator $E(\cdot)$ denotes the expected value of a random variable. The function $p(\cdot)$ represents the probability density function (PDF) of a random variable and $\mathbf{x} \sim \mathcal{CN}(\bm{\mu}, \mathbf{K})$, where $\mathbf{K} \in \mathbb{R}^n$, denotes that $\mathbf{x}$ is a complex Gaussian random vector, with mean $\bm{\mu}$ and covariance matrix $\mathbf{K}$. The sets $\mathbb{R}$, $\mathbb{C}$ and $\mathbb{B}$ are the sets of the real, complex and binary numbers, respectively. A calligraphic uppercase letter, such as $\mathcal{X}$, denotes a set and $|\mathcal{X}|$ is its cardinality. The function $\ln(\cdot)$ denotes the natural logarithm of its argument, while the function $I(\cdot ; \cdot)$ is the mutual information between two random variables.

\section{System Model}
\label{sec:sys_model}

\begin{figure*}[!ht]
\centering{	
  \ifCLASSOPTIONtwocolumn 
  	\includegraphics[width=2\columnwidth]{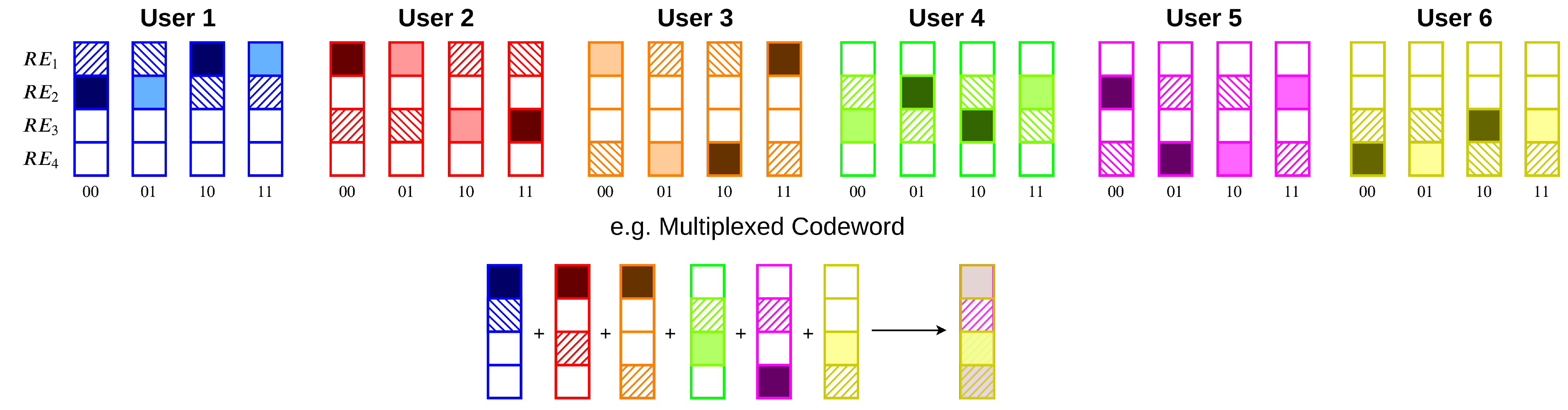}
  \else
  	\includegraphics[width=1\columnwidth]{figs/scma_const-eps-converted-to.pdf}
  \fi
  \caption{Example of an SCMA uplink system with $J = 6$, $K = 4$, $N = 2$ and $d_f = 3$. The square arrays demonstrate the codebook of each user and each square represent the available resource elements (RE). An empty square indicates that no signal is transmitted in the RE and different filling patterns indicate a different complex value.
  		   \label{fig:scma_multiplex}}
}
\end{figure*}

\subsection{SCMA Overview}

\begin{figure}[ht!]
\centering{
  \ifCLASSOPTIONtwocolumn 
		\includegraphics[width=0.5\columnwidth]{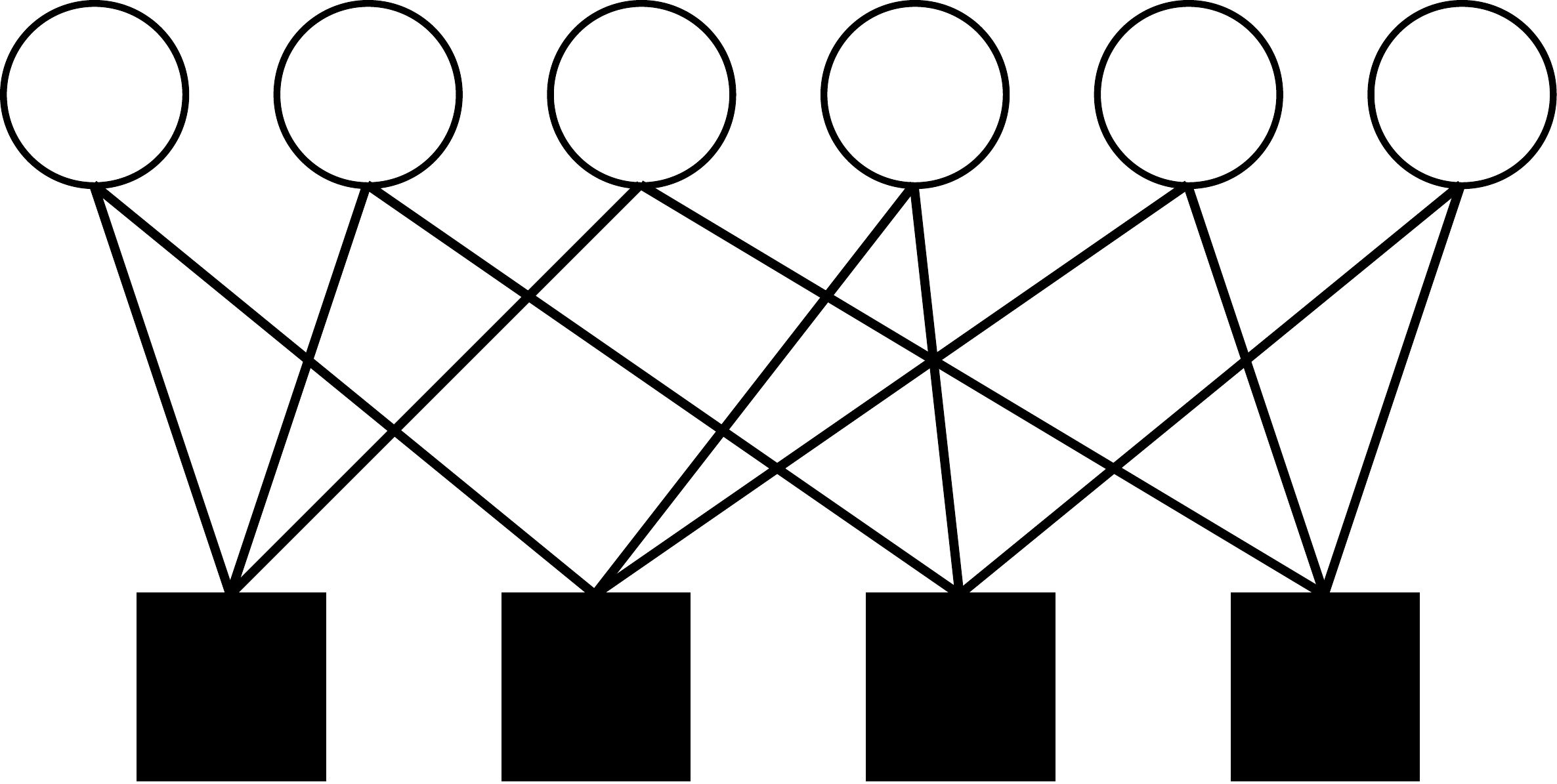}
  \else
    	\includegraphics[width=0.35\columnwidth]{figs/ex_factor_graph-eps-converted-to.pdf}
  \fi
  
  \caption{Example factor graph with $J = 6$, $K = 4$, $N = 2$ and $d_f = 3$. The circles denote user nodes and the squares denote resource nodes. 
  		   \label{fig:factor_graph}}
}
\end{figure}

Consider a system consisting of one base station (BS), and let $\mathcal{K}$ be the set of available resources (OFDMA subcarriers, MIMO spatial layers and so on), with $|\mathcal{K}| = K$, and $\mathcal{J}$ be the set of users served by the BS, with $|\mathcal{J}| = J$. Each user transmits a symbol from a multidimensional constellation with order $M$. The SCMA encoder is a mapping $f: \mathbb{B}^{\log_2(M)} \rightarrow \mathcal{S}_j$, with $\mathbf{s}_j = f(\mathbf{b}_j)$, where $\mathbf{b}_j \in \mathbb{B}^{\log_2(M)}$ is a vector of bits taken at the output of a channel encoder, $\mathcal{S}_j \subset \mathbb{C}^{K}$, $|\mathcal{S}_j| = M$ and $\mathbf{s}_j$ is a sparse vector with $N < K$ nonzero elements for all $j \in \mathcal{J}$. Each user encodes its transmitted signal from a different multidimensional constellation $\mathcal{S}_j$. 
Therefore,  the BS serves up to $J = {{K}\choose{N}}$ users simultaneously and up to $d_f = {{K-1}\choose{N-1}}$ users are allocated on the same resource. The overloading factor of the cell is given by $\lambda = J/K$. Figure \ref{fig:scma_multiplex} shows an example of codebooks and a multiplexed codeword for an SCMA system with $K=4$, $J=6$, $N=2$ and $d_f=3$.  In this figure, each square represents a subcarrier and the different colors represents the codebook of a different user. The texture of the squares is a different symbol from the user's mother constellation, while the blank square indicates that no signal is transmitted at the subcarrier by the user.  In the second row, we give an example of the resulting received signal, which is a superposition of the transmitted symbol by each user, for the transmission of an arbitrary pair of bits by each user.

Optimal SCMA decoding is achieved by maximum a posteriori (MAP) decoding. However, due to the complexity of MAP decoding, message passing algorithms (MPA) that achieve near-optimal decoding, such as belief propagation (BP) \cite{mceliece.1998} are employed, resulting in a complexity of $\mathcal{O}(M^{d_f})$. In order to reduce the decoding complexity of SCMA, alternative receiver architectures have been proposed, such as the SIC-MPA decoder \cite{3gpp,zou.2015} which is a hybrid of the SIC and MPA procedure, and the list spherical decoding (LSD) algorithm \cite{wei.2017}. 

The structure of the SCMA code can be neatly conveyed through a factor graph representation. Let $\mathbf{F} \in \mathbb{B}^{K \times J}$ be the factor graph matrix, each element $f_{k,j}$ indicates if any information from the user $j$ is transmitted on resource $k$. Figure \ref{fig:factor_graph} illustrates a factor graph with $J = 6$, $K = 4$, $N = 2$ and $d_f = 3$ corresponding to the codebook shown in Figure \ref{fig:scma_multiplex}, where the circular vertices represent each user, the squared vertices denote the resources and the edges between them the allocation of a user to specific resources. 
The reader may refer to \cite{nikopour.2013} for more details on the encoder/decoder structure of SCMA. 

In a SCMA system, the signal received by the BS at the resource $k$ can be written as
\begin{equation}
\label{eq:scma_rx_per_res}
y_{k} = \sum_{j \in \mathcal{J}} f_{k,j} h_{k,j} s_{k,j} + n_k,
\end{equation}
where $h_{k, j}$ is the channel coefficient, $s_{k,j}$ is the symbol transmitted from user $j$ on the $k$-th resource, with average power $p_{k,j} = E(|s_{k,j}|^2)$, and  $n_k$ is the $k$-th component of $\mathbf{n} \sim \mathcal{CN}(0, \sigma_n^2 \mathbf{I})$. Here, we assume that $h_{k, j} = \frac{g_{k, j}}{\sqrt{1 + r_j^\alpha}}$. Without loss of generality, we assume $g_{k,j}$ is a Rayleigh distributed random variable representing the small scale fading of the channel of user $j$ on subcarrier $k$, $r_j$ is the distance of user $j$ from the BS and $\alpha$ is the path loss exponent. Throughout this work, it is assumed that the users send a pilot sequence periodically, and, the BS is able to  perfectly estimate the CSI. From (\ref{eq:scma_rx_per_res}), the received signal vector at the BS is written as
\begin{equation}
\mathbf{y} = \mathbf{H} \mathbf{x} + \mathbf{n},
\end{equation}
where $\mathbf{y} \in \mathbb{C}^K$ is a complex vector, $\mathbf{H} \in \mathbb{C}^{K \times KJ}$ is a matrix composed of submatrices, such that, $\mathbf{H} = [\mathbf{H}_1, \mathbf{H}_2, \cdots, \mathbf{H}_J]$, where, $\mathbf{H}_j = \diag([h_{1,j}, h_{2,j}, \cdots h_{K, j}]^T) \text{ } \forall \text{ } j \in \mathcal{J}$. The vector $\mathbf{x} \in \mathbb{C}^{KJ}$ is given by $\mathbf{x} = \begin{bmatrix} \mathbf{x}_1^T & \mathbf{x}_2^T & \cdots & \mathbf{x}_J^T \end{bmatrix}^T$, where $\mathbf{x}_j = \begin{bmatrix} f_{1,j} s_{1, j} & f_{2,j} s_{2, j} & \cdots & f_{K,j} s_{K, j} \end{bmatrix}^T \text{ } \forall \text{ } j \in \mathcal{J}$.

In this paper, we consider a centralized resource allocation architecture, where $K$ users periodically transmit a pilot signal to the BS. We assume the BS obtains perfect CSI, solves the optimization problem described in Section \ref{sec:analysis}, and tells each user which subcarriers, power, and code-rate to use for the next period of time. We assume the channels are quasi-static, so that users can encode at a fixed rate for a period of time. The process is repeated periodically, where the allocations are changed, and the BS tells users to change their transmission accordingly. 

The sum-rate of a SCMA system is defined as the maximum mutual information between the received and transmitted signals. Therefore, assuming channel knowledge at the receiver we have 
\begin{eqnarray}
\label{eq:scma_capacity_der}
R^{\text{sum}}_{\text{SCMA}} &=& \max_{p(\mathbf{x})} I(\mathbf{x} ; \mathbf{y}|\mathbf{H} = \mathbf{H}^\prime) \nonumber \\ 
&=& \max_{p(\mathbf{x})} h(\mathbf{y}|\mathbf{H} = \mathbf{H}^\prime) - h(\mathbf{n})  \nonumber \\
&\overset{(a)}{\leq}& \ln[(\pi e)^K \det(\sigma_n^2 \mathbf{I}_K + 
\mathbf{H}^\prime \mathbf{K}_x {\mathbf{H}^\prime}^H)] - K \ln[\pi e \sigma_n^2] \nonumber \\
&=& \ln\left[\det \left(\mathbf{I}_K + \frac{1}{\sigma_n^2} \mathbf{H}^\prime \mathbf{K}_x {\mathbf{H}^\prime}^H \right) \right].
\end{eqnarray}
In (\ref{eq:scma_capacity_der}), the inequality in $(a)$ follows since a Gaussian input maximizes the entropy of a random vector, under a covariance constraint \cite{cover}. In this paper, we are concerned with maximizing this upper bound in the Max-SR algorithm which is referred henceforth as $C_{\text{SCMA}}$. It is worth noting that for an increase on $d_f$ the distribution of $\mathbf{y}$ approaches a multivariate Gaussian, due to the central limit theorem. 
Furthermore, $\mathbf{K}_x \in \in \mathbb{C}^{KJ \times KJ}$ is the covariance matrix of $\mathbf{x}$ and is given by 
\begin{eqnarray}
\label{eq:cov_codebook}
\mathbf{K}_x =
\begin{bmatrix}
E(\mathbf{x}^{}_1 \mathbf{x}^H_1) & E(\mathbf{x}^{}_1 \mathbf{x}_2^H) & \cdots & E(\mathbf{x}^{}_1 \mathbf{x}_J^H) \\
\vdots & \vdots  & \ddots & \vdots \\
E(\mathbf{x}^{}_J \mathbf{x}_1^H) & E(\mathbf{x}^{}_J \mathbf{x}_2^H) & \cdots & E(\mathbf{x}^{}_J \mathbf{x}_J^H) \\
\end{bmatrix}
,
\end{eqnarray}
where, each $E(\mathbf{x}_i^{} \mathbf{x}_j^H)$ is given by
\begin{equation}
\label{eq:cov_elem}
E(\mathbf{x}_i \mathbf{x}_j^H) = 
\begin{bmatrix}
E\left(x_{1, i}^{} x_{1, j}^* \right) & E\left(x_{1, i}^{} x_{2, j}^* \right) & \cdots & E\left(x_{1, i}^{} x_{K, j}^* \right) \\
\vdots 				 & \vdots				& 	\ddots	 & \vdots \\
E\left(x_{K, i}^{} x_{1, j}^* \right) & E\left(x_{K, i}^{} x_{2, j}^* \right) & \cdots & E\left(x_{K, i}^{} x_{K, j}^* \right)
\end{bmatrix}.
\end{equation}
As concluded in \cite{li.2016}, if each nonzero coordinate of $\mathbf{s}$ is drawn from centrally symmetric constellations, the cross correlation between the multidimensional symbols from different users is equal to zero, hence, $E(\mathbf{x}_i \mathbf{x}_j^H)$ is equal to a $K \times K$ zero matrix for any $i \neq j$. Thus, (\ref{eq:cov_codebook}) is diagonal. For a more generic derivation, not relying on the central symmetry of the constellation the reader may refer to \cite{le.2018, zaidel.2018, shental.2017}.

As $\mathbf{H}_j$ is diagonal for all values of $j$, we have
\begin{eqnarray}
\label{eq:cov_channel_prod}
\mathbf{H} \mathbf{K}_x \mathbf{H}^H = \underset{j=1}{\overset{J}{\sum}} \mathbf{H}_j E\left(\mathbf{x}_j^{} \mathbf{x}_j^H \right) \mathbf{H}_j^H  \hspace*{3cm}\nonumber \\
=
\begin{bmatrix}
\underset{j=1}{\overset{J}{\sum}} h_{1,j} E \left( x_{1,j} x_{1,j}^* \right) h_{1,j}^* & \cdots & 0 \\
\vdots & \ddots & \vdots \\
0 & \cdots & \underset{j=1}{\overset{J}{\sum}} h_{K,j} E \left( x_{K,j} x_{K,j}^* \right) h_{K,j}^* 
\end{bmatrix}
\end{eqnarray}
It is worth noting that such codebook satisfies the design principles established in \cite{taherzadeh.2014}. With that being said, the SCMA sum-rate in (\ref{eq:scma_capacity_der}) can be simplified as shown 
\ifCLASSOPTIONtwocolumn
in (\ref{eq:sumrate}), at the top of the next page.
\begin{figure*}
\begin{eqnarray}
\label{eq:sumrate}
C_{\text{SCMA}} 
&=& \ln \det \left(\mathbf{I}_K + \frac{1}{\sigma_n^2} \mathbf{H} \mathbf{K}_x \mathbf{H}^H \right) 
= \ln \det 
\begin{bmatrix}
1 + \frac{\underset{j \in \mathcal{J}}{\sum} |h_{1, j}|^2 E(x_{1, j}^2)}{\sigma_n^2} & \cdots & 0 \\
\vdots & \ddots & \vdots \\
0 & \cdots & 1 + \frac{\underset{j \in \mathcal{J}}{\sum} |h_{K, j}|^2 E(x_{K, j}^2)}{\sigma_n^2}  
\end{bmatrix} \nonumber \\
&=& \underset{k \in \mathcal{K}}{\sum} \ln \left( 1 + \frac{\underset{j \in \mathcal{J}}{\sum} |h_{k, j}|^2 E(x_{k, j}^2)}{\sigma_n^2} \right) 
= \underset{k \in \mathcal{K}}{\sum} \ln \left( 1 + \frac{\underset{j \in \mathcal{J}}{\sum} |h_{k, j}|^2 f_{k, j} p_{k, j}}{\sigma_n^2} \right).
\end{eqnarray}
\hrulefill
\end{figure*}
\else
below
\begin{eqnarray}
\label{eq:sumrate}
C_{\text{SCMA}} 
&=& \ln \det \left(\mathbf{I}_K + \frac{1}{\sigma_n^2} \mathbf{H} \mathbf{K}_x \mathbf{H}^H \right) 
= \ln \det 
\begin{bmatrix}
1 + \frac{\underset{j \in \mathcal{J}}{\sum} |h_{1, j}|^2 E(x_{1, j}^2)}{\sigma_n^2} & \cdots & 0 \\
\vdots & \ddots & \vdots \\
0 & \cdots & 1 + \frac{\underset{j \in \mathcal{J}}{\sum} |h_{K, j}|^2 E(x_{K, j}^2)}{\sigma_n^2}  
\end{bmatrix} \nonumber \\
&=& \underset{k \in \mathcal{K}}{\sum} \ln \left( 1 + \frac{\underset{j \in \mathcal{J}}{\sum} |h_{k, j}|^2 E(x_{k, j}^2)}{\sigma_n^2} \right) 
= \underset{k \in \mathcal{K}}{\sum} \ln \left( 1 + \frac{\underset{j \in \mathcal{J}}{\sum} |h_{k, j}|^2 f_{k, j} p_{k, j}}{\sigma_n^2} \right).
\end{eqnarray}
\fi
Furthermore, by assuming a decoding order starting from user $J$ to user $1$ and using the logarithm identity ${\log_b (a + c) = \log_b(a) + \log_b(1 + \frac{c}{a})}$, it is possible to obtain the achievable rate of user $j$ on resource $k$, $C_{k, j}$, as 
\ifCLASSOPTIONtwocolumn
shown in \eqref{eq:sumrate_decomposition}, at the top of the next page.
\begin{figure*}
\begin{eqnarray}
\label{eq:sumrate_decomposition}
&& \ln \left( 1 + \frac{\underset{j \in \mathcal{J}}{\sum} |h_{k, j}|^2 f_{k, j} p_{k, j}}{\sigma_n^2} \right) \nonumber 
= \ln \left( \frac{\sigma_n^2 + \underset{j \in \mathcal{J}}{\sum} |h_{k, j}|^2 f_{k, j} p_{k, j}}{\sigma_n^2} \right) \nonumber \\
&=& \ln \left(\frac{\sigma_n^2 + |h_{k,1}|^2 f_{k,1} p_{k,1}}{\sigma_n^2} \right) \nonumber 
+ \ln \left(\frac{\sigma_n^2 + |h_{k,1}|^2 f_{k,1} p_{k,1} + |h_{k,2}|^2 f_{k,2} p_{k,2}}{\sigma_n^2 + |h_{k,1}|^2 f_{k,1} p_{k,1}} \right) \nonumber 
+ \cdots + \ln \left(\frac{\sigma_n^2 + \overset{J-1}{\underset{i=1}{\sum}} |h_{k,i}|^2 f_{k,i} p_{k,i} + |h_{k,J}|^2 f_{k,J} p_{k,J}}{\sigma_n^2 + \overset{J-1}{\underset{i=1}{\sum}} |h_{k,i}|^2 f_{k,i} p_{k,i}} \right) \nonumber \\
&=& \underset{j \in \mathcal{J}}{\sum} \ln \left( 1 + \frac{|h_{k,j}|^2 f_{k,j} p_{k,j}}{\sigma_n^2 + \overset{j-1}{\underset{i = 1}{\sum}}|h_{k,i}|^2 f_{k,i} p_{k,i}}  \right)  
= \underset{j \in \mathcal{J}}{\sum} C_{k,j}.
\end{eqnarray}
\hrulefill
\end{figure*}
\else
\begin{eqnarray}
\label{eq:sumrate_decomposition}
&& \ln \left( 1 + \frac{\underset{j \in \mathcal{J}}{\sum} |h_{k, j}|^2 f_{k, j} p_{k, j}}{\sigma_n^2} \right) \nonumber 
= \ln \left( \frac{\sigma_n^2 + \underset{j \in \mathcal{J}}{\sum} |h_{k, j}|^2 f_{k, j} p_{k, j}}{\sigma_n^2} \right) \nonumber \\
&=& \ln \left(\frac{\sigma_n^2 + |h_{k,1}|^2 f_{k,1} p_{k,1}}{\sigma_n^2} \right) \nonumber 
+ \ln \left(\frac{\sigma_n^2 + |h_{k,1}|^2 f_{k,1} p_{k,1} + |h_{k,2}|^2 f_{k,2} p_{k,2}}{\sigma_n^2 + |h_{k,1}|^2 f_{k,1} p_{k,1}} \right) \nonumber 
+ \cdots  \nonumber \\
&+& \ln \left(\frac{\sigma_n^2 + \overset{J-1}{\underset{i=1}{\sum}} |h_{k,i}|^2 f_{k,i} p_{k,i} + |h_{k,J}|^2 f_{k,J} p_{k,J}}{\sigma_n^2 + \overset{J-1}{\underset{i=1}{\sum}} |h_{k,i}|^2 f_{k,i} p_{k,i}} \right) \nonumber \\
&=& \underset{j \in \mathcal{J}}{\sum} \ln \left( 1 + \frac{|h_{k,j}|^2 f_{k,j} p_{k,j}}{\sigma_n^2 + \overset{j-1}{\underset{i = 1}{\sum}}|h_{k,i}|^2 f_{k,i} p_{k,i}}  \right)  
= \underset{j \in \mathcal{J}}{\sum} C_{k,j}.
\end{eqnarray}
\fi
Therefore, the achievable rate of each user, $C_j$, is given by
\begin{equation}
C_j = \underset{k \in \mathcal{K}}{\sum} C_{k,j} = \underset{k \in \mathcal{K}}{\sum} \ln \left( 1 + \frac{|h_{k,j}|^2 f_{k,j} p_{k,j}}{\sigma_n^2 + \overset{j-1}{\underset{i = 1}{\sum}} |h_{k,i}|^2 f_{k,i} p_{k,i}}  \right).
\end{equation}
In the next section, we formulate the joint subcarrier and power allocation problems and present our proposed method to solve them.
\section{Joint Subcarrier and Power Allocation}
\label{sec:analysis}
we formulate and propose two joint subcarrier and power allocation algorithms to solve two optimization problems: maximizing the sum-rate ($\mathbf{P}_{\textbf{Max-SR}}$) and maximizing the minimum rate of users ($\mathbf{P}_{\textbf{Max-Min}}$). The former can be formulated as
\begingroup
\allowdisplaybreaks[0]
\begin{eqnarray}
&&\mathbf{P}_{\textbf{Max-SR}}: \nonumber \\
\label{eq:obj_sr}
&\underset{{\mathbf{P}, \mathbf{F}}}{\max}& C_{\text{SCMA}} = \underset{k \in \mathcal{K}}{\sum} \ln \left( 1 + \frac{\underset{j \in \mathcal{J}}{\sum} |h_{k, j}|^2 f_{k, j} p_{k, j}}{\sigma_n^2} \right) \\
\label{eq:c_max_res_per_user}
&\textbf{s.t.}& \underset{k \in \mathcal{K}}{\sum} f_{k,j} \leq N \text{ } \forall \text{ } j \in \mathcal{J} \\
\label{eq:c_max_user_per_res}
&& \underset{j \in \mathcal{J}}{\sum} f_{k,j} \leq d_f \text{ } \forall \text{ } k \in \mathcal{K} \\
\label{eq:c_max_pwr}
&& \underset{k \in \mathcal{K}}{\sum} f_{k,j} p_{k,j} \leq P^{(j)}_{\text{max}} \text{ } \forall \text{ } j \in \mathcal{J} \\
\label{eq:c_int_ch}
&& f_{k,j} \in \{0,1\} \text{ } \forall \text{ } k \in \mathcal{K} \text{ and } \forall \text{ } j \in \mathcal{J},
\end{eqnarray}
\endgroup
where $\mathbf{P} \in \mathbb{R}^{K \times J}$ is the matrix of allocated power, (\ref{eq:obj_sr}) is the sum-rate, and (\ref{eq:c_max_res_per_user}) is the constraint on the number of subcarriers allocated per user. The constraint on the number of users per subcarrier is given by (\ref{eq:c_max_user_per_res}), while (\ref{eq:c_max_pwr}) is the constraint on the maximum transmitting power available per user. Finally, (\ref{eq:c_int_ch}) is a binary constraint on the values of $f_{k,j}$. 
Furthermore, the problem $\mathbf{P}_{\textbf{Max-Min}}$ is formulated as
\begin{eqnarray}
&&\mathbf{P}_{\textbf{Max-Min}}: \nonumber \\
\label{eq:obj_mm}
&\underset{{\mathbf{P}, \mathbf{F}}}{\max}&  
\underset{j \in \mathcal{J}}{\min} \underset{k \in \mathcal{K}}{\sum} \ln \left( 1 + \frac{|h_{k,j}|^2 f_{k,j} p_{k,j}}{\sigma_n^2 + \overset{j-1}{\underset{i = 1}{\sum}}|h_{k,i}|^2 f_{k,i} p_{k,i}}  \right)\\
&\textbf{s.t.}& (\ref{eq:c_max_res_per_user}), (\ref{eq:c_max_user_per_res}), (\ref{eq:c_max_pwr}), (\ref{eq:c_int_ch}), \nonumber
\end{eqnarray}
where (\ref{eq:obj_mm}) is the max-min utility function of the rate per user. The objective function of this problem is non-concave and, similarly to $\textbf{P}_{\textbf{Max-SR}}$, also has integer constraints on $\mathbf{F}$. Consequently, we can prove the following statement

\begin{theorem}
\label{th:np_hard}
Both the $\mathbf{P_{\text{Max-SR}}}$ and the $\mathbf{P_{\text{Max-Min}}}$ problems are NP-hard.
\end{theorem}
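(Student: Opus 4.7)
The plan is to show NP-hardness of both problems by polynomial-time reductions from classical NP-hard decision problems, exploiting the fact that the binary variables $f_{k,j}$ together with the loading constraints \eqref{eq:c_max_res_per_user}--\eqref{eq:c_max_user_per_res} expose a purely combinatorial core once the power variables are pinned down. For $\mathbf{P}_{\text{Max-SR}}$, I would reduce from the Partition problem. Given positive integers $a_1,\ldots,a_n$ with $\sum_j a_j = T$, construct a restricted SCMA instance with $K=2$, $J=n$, $N=1$, $d_f=n$, noise $\sigma_n^2=1$, per-user power budgets $P_{\max}^{(j)}=1$, and identical channel gains $|h_{1,j}|^2=|h_{2,j}|^2=a_j$. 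Each user must pick exactly one of the two subcarriers, and using full power is optimal, so \eqref{eq:obj_sr} collapses to $\ln(1+A_1)+\ln(1+A_2)$ where $A_k=\sum_{j:f_{k,j}=1}a_j$ and $A_1+A_2=T$. Strict concavity of $\ln$ yields a unique maximizer at $A_1=A_2=T/2$, so the objective attains its maximum $2\ln(1+T/2)$ if and only if the Partition instance is a yes-instance; an oracle for $\mathbf{P}_{\text{Max-SR}}$ would therefore decide Partition in polynomial time.

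For $\mathbf{P}_{\text{Max-Min}}$, the plan is analogous but requires care because the SIC structure in \eqref{eq:sumrate_decomposition} makes the rates of colocated users asymmetric. I would first restrict to $d_f=1$, which makes the subcarriers orthogonal and eliminates interference entirely, so that each user's rate becomes $\sum_k f_{k,j}\ln(1+|h_{k,j}|^2 p_{k,j}/\sigma_n^2)$. The restricted problem is then a subcarrier-to-user assignment maximizing the minimum user rate, and I would reduce from the Max-Min Fair Allocation (Santa Claus) problem, known to be strongly NP-hard. Choosing gains so that $\ln(1+|h_{k,j}|^2 P_{\max}^{(j)}/\sigma_n^2)$ equals a prescribed Santa Claus value $v_{j,k}$ matches the SCMA per-user per-subcarrier rates to the item values, completing the reduction.

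The main obstacle I expect is the handling of SIC interference when $d_f>1$ in the Max-Min setting: colocated users have rates that depend on their position in the decoding order, and constructing a reduction robust to this asymmetry is delicate. Restricting to $d_f=1$ sidesteps this cleanly but at the price of a reduction that does not exercise the NOMA-specific overloading. If a hardness proof that uses nontrivial $d_f$ is desired, one would need to tune channel gains and per-user power budgets so that every colocated user contributes symmetrically to the rate of its shared subcarrier regardless of its SIC position, which would permit a Partition-based reduction in the style of the Max-SR argument.
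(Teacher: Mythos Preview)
Your reduction for $\mathbf{P}_{\text{Max-SR}}$ from \textsc{Partition} is correct and clean: with $K=2$, $N=1$, $d_f=n$, full power on the single chosen subcarrier is trivially optimal, the sum-rate collapses to $\ln(1+A_1)+\ln(1+A_2)$ with $A_1+A_2=T$, and strict concavity of $\ln$ pins the optimum at an equal split. This is a genuinely different route from the paper's. The paper does not use \textsc{Partition}; instead it freezes $\mathbf{P}$ and casts the subcarrier-only subproblem as an instance of the hypergraph assignment problem (HAP), identifying subcarriers and per-user slot vertices with the two sides of a bipartite hypergraph and taking \eqref{eq:obj_sr} (respectively \eqref{eq:obj_mm}) as the hyperassignment cost. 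Your argument is more elementary and self-contained, needing only a textbook NP-hard problem; the paper's argument covers both objectives with a single structural template but relies on the external HAP hardness result and must be read carefully with respect to the direction of the reduction.

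Your plan for $\mathbf{P}_{\text{Max-Min}}$, however, has a gap, and it is not the one you flag. The obstacle is not the SIC asymmetry at $d_f>1$; it already appears at $d_f=1$. When you choose gains so that $\ln\bigl(1+|h_{k,j}|^2 P_{\max}^{(j)}/\sigma_n^2\bigr)=v_{j,k}$, you are implicitly letting user $j$ place its \emph{entire} budget on every assigned subcarrier. But \eqref{eq:c_max_pwr} is a single per-user budget shared across subcarriers: if user $j$ receives a set $S_j$, its rate is
\[
\max_{\sum_{k\in S_j}p_k\le P_{\max}^{(j)}}\ \sum_{k\in S_j}\ln\!\Bigl(1+\tfrac{|h_{k,j}|^2 p_k}{\sigma_n^2}\Bigr),
\]
a water-filling value that is \emph{not} additive in the assigned subcarriers. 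The user utilities therefore do not equal the Santa Claus sums $\sum_{k\in S_j}v_{j,k}$, and the reduction as written does not go through. Patching this is not straightforward: forcing each user to receive at most one subcarrier in any optimum collapses the problem to bipartite matching (polynomial), and neither the low-SNR linearization $\ln(1+x)\approx x$ nor the high-SNR approximation $\ln(1+x)\approx\ln x$ produces additive per-item utilities once the budget is split. The paper sidesteps the power-splitting issue entirely by fixing $\mathbf{P}$ before invoking HAP, so that the combinatorics are decoupled from constraint \eqref{eq:c_max_pwr}; if you want to keep a \textsc{Santa Claus}-style argument, you will likewise need to pin down $\mathbf{P}$ first, or reduce from a problem whose utilities match the water-filling structure.
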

\begin{proof}
See Appendix \ref{app:np_proof}.
\end{proof}

In order to solve both these problems, we relax the integer constraint on matrix $\mathbf{F}$, given in equation (\ref{eq:c_int_ch}), to a continuous one. Afterwards, we apply the block successive lower bound maximization (BSLM), which is the maximization variant of the approach proposed in \cite{razaviyayn.2013}, which converges to a local minimum of the relaxed problem \cite{razaviyayn.2013}.

For the sake of clarity, we list the conditions for the convergence of the BSLM algorithm. First consider the problem below 
\begin{eqnarray}
&\underset{\mathbf{x}}{\max}& f(\mathbf{x}) \nonumber \\
&\textbf{s.t.}& \mathbf{x} \in \mathcal{X} \nonumber,
\end{eqnarray}
where $f:\mathbb{R}^n \rightarrow \mathbb{R}$ is a non-concave and possibly non-smooth function. Let $\mathcal{X}$ be the cartesian product of $m$ closed convex sets: $\mathcal{X} = \mathcal{X}_1 \times \cdots \times \mathcal{X}_m$, with $\mathcal{X}_i \subseteq \mathbb{R}^{n_i}$ and $\underset{i}{\sum} n_i = n$. Furthermore the optimization variable $\mathbf{x} \in \mathbb{R}^n$ can be decomposed into $m$ vectors $\mathbf{x} = (\mathbf{x}_1, \cdots, \mathbf{x}_m)$, such that $\mathbf{x}_i \in \mathcal{X}_i$. At iteration $t$ of the BSLM algorithm, the blocks of optimization variables are updated cyclically, where for each block the following problem is solved
\begin{eqnarray}
&\underset{\mathbf{x}_i}{\max}& \tilde{f}(\mathbf{x}_i, \mathbf{x}^{(t-1)}) \nonumber \\
&\textbf{s.t.}& \mathbf{x}_i \in \mathcal{X}_i, \nonumber
\end{eqnarray}
where $i = t \modulus m$, and $\mathbf{x}^{(t-1)}$ is the previous value of $\mathbf{x}$. The convergence of the BLSM algorithm is guaranteed if the following conditions hold for $\tilde{f}(\mathbf{x}_i, \mathbf{x}^{(t-1)})$:
\begin{eqnarray}
&\tilde{f}(\mathbf{x}_i, \mathbf{x})& = f(\mathbf{x}), \quad \forall \, \mathbf{x} \in \mathcal{X}, \, \forall \, i \label{eq:ass1}\\
&\tilde{f}(\mathbf{x}_i, \mathbf{y})& \leq f(\mathbf{y}_1, \cdots, \mathbf{y}_{i-1}, \mathbf{x}_i, \mathbf{y}_{i+1}, \cdots, \mathbf{y}_m), \nonumber \\ && \quad \forall \, \mathbf{x}_i \in \mathcal{X}_i, \, \forall \, \mathbf{y} \in \mathcal{X}, \, \forall \, i \label{eq:ass2}\\
&\nabla \tilde{f}(\mathbf{x}_i, \mathbf{x})& = \nabla f(\mathbf{x}) \label{eq:ass3}\\
&\tilde{f}(\mathbf{x}_i, \mathbf{y})& \text{ is continuous in } (\mathbf{x}_i, \mathbf{y}), \quad \forall \, i \label{eq:ass4}
\end{eqnarray} 

In the rest of this section, we propose lower bound convex approximations to the objective functions of both $\mathbf{P_{\text{Max-SR}}}$ and $\mathbf{P_{\text{Max-Min}}}$ satisfying conditions (\ref{eq:ass1})-(\ref{eq:ass4}), and finalize by providing a description of the block update algorithm that converges to locally optimal solutions.

\subsection{Solving $\mathbf{P_{\text{Max-SR}}}$}

To solve this problem, we first relax the integer constraint in (\ref{eq:c_int_ch}), and add a penalty term to the objective function, such that, non-integer solutions to $\mathbf{F}$ are penalized. The $\mathbf{P_{\text{Max-SR}}}$ becomes
\begingroup
\allowdisplaybreaks[0]
\begin{eqnarray}
&&\mathbf{P}_{\textbf{Max-SR}}^\prime: \nonumber \\
\label{eq:obj_sr_pen}
&\underset{{\mathbf{P}, \mathbf{F}}}{\max}& \underset{k \in \mathcal{K}}{\sum} \ln \left( 1 + \frac{\underset{j \in \mathcal{J}}{\sum} |h_{k, j}|^2 f_{k, j} p_{k, j})}{\sigma_n^2} \right) + \gamma \left( \mathbf{F} \right) \\
\label{eq:c_rlx_ch}
&\textbf{s.t.}& 0 \leq f_{k,j} \leq 1 \text{ } \forall \text{ } k \in \mathcal{K} \text{ and } \forall \text{ } j \in \mathcal{J} \\
&& (\ref{eq:c_max_res_per_user}), (\ref{eq:c_max_user_per_res}), (\ref{eq:c_max_pwr})
\end{eqnarray}
\endgroup
where $\gamma \left( \mathbf{F} \right) = \lambda \underset{k \in \mathcal{K}}{\sum} \underset{j \in \mathcal{J}}{\sum} \left( f_{k, j}^2 - f_{k,j} \right)$ is the penalty function \footnote{By selecting moderately high values for $\lambda$ (around $20$), integer solutions are obtained. As a matter of fact, higher values for $\lambda$ results in faster convergence to an integer solution, however, it renders the optimization solver iterations more unstable as it contributes to the ill-conditioning of the problem. On the other hand, smaller $\lambda$ leads to more conservative updates of $\mathbf{F}$ at each BSLM step, resulting in slower convergence, but better optimizers.}. Notice that $\gamma \left( \mathbf{F} \right) < 0$ for all non-integer solutions and $\gamma \left( \mathbf{F} \right) = 0$ for integer ones. This gives incentive for the algorithm to obtain solutions that minimize $\gamma \left( \mathbf{F} \right)$, hence leading to integer solutions of $\mathbf{F}$.

There are two issues that make $\mathbf{P}_{\textbf{Max-SR}}^\prime$ a hard problem to solve:
\begin{itemize}
\item The presence of multi-linear terms of the form $f_{k,j} p_{k,j}$ in (\ref{eq:obj_sr_pen}) and (\ref{eq:c_max_pwr}).
\item Even if $\mathbf{F}$ and $\mathbf{P}$  are updated cyclically, the objective function in (\ref{eq:obj_sr_pen}) is non-concave, due to the addition of $\gamma \left( \mathbf{F} \right)$, which is convex.
\end{itemize}
Now we present two Lemmas that are instrumental to the algorithm that finds a locally optimal solution to $\mathbf{P}_{\textbf{Max-SR}}^\prime$ in polynomial time.
\begin{lemma}
\label{lemma:maxsr_cycle}
If $\mathbf{F}$ and $\mathbf{P}$ are updated cyclically in $\mathbf{P}_{\textbf{Max-SR}}^\prime$, the feasible set of the problem solved in each update step is convex.
\end{lemma}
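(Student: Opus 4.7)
The plan is to verify the statement by inspecting the constraint set of each of the two subproblems produced by the cyclic block update separately, since there are only two blocks of variables, $\mathbf{F}$ and $\mathbf{P}$. In each subproblem, one of the two blocks is held fixed at its value from the previous iteration, and I claim that the resulting constraint set is a polyhedron and hence convex.

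First I would consider the subproblem obtained by fixing $\mathbf{F} = \mathbf{F}^{(t-1)}$ and optimizing over $\mathbf{P}$. The constraints (\ref{eq:c_max_res_per_user}), (\ref{eq:c_max_user_per_res}) and the relaxed binary constraint (\ref{eq:c_rlx_ch}) involve only $\mathbf{F}$ and are therefore inactive in this subproblem. The only remaining nontrivial constraint is the per-user power budget (\ref{eq:c_max_pwr}), which reads $\sum_{k \in \mathcal{K}} f_{k,j}^{(t-1)} p_{k,j} \leq P^{(j)}_{\max}$ for every $j \in \mathcal{J}$. With $f_{k,j}^{(t-1)}$ treated as constants, each such inequality is affine in $\mathbf{P}$, and, together with the implicit nonnegativity $p_{k,j} \geq 0$, defines a closed half-space. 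The feasible set is the intersection of these half-spaces, which is a polyhedron and hence convex.

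Next I would consider the subproblem obtained by fixing $\mathbf{P} = \mathbf{P}^{(t-1)}$ and optimizing over $\mathbf{F}$. The constraints (\ref{eq:c_max_res_per_user}) and (\ref{eq:c_max_user_per_res}) are already affine in $\mathbf{F}$. The box constraint (\ref{eq:c_rlx_ch}) defines the unit hypercube $[0,1]^{K\times J}$, which is clearly convex. The only coupling constraint, (\ref{eq:c_max_pwr}), becomes $\sum_{k \in \mathcal{K}} p_{k,j}^{(t-1)} f_{k,j} \leq P^{(j)}_{\max}$, which is again affine in $\mathbf{F}$ because the bilinear term $f_{k,j}p_{k,j}$ collapses to a linear expression once $p_{k,j}^{(t-1)}$ is frozen. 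Hence the feasible set is once more the intersection of finitely many half-spaces, i.e.\ a polyhedron, and therefore convex.

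There is no genuine obstacle here, because the only source of nonconvexity in the joint feasible set of $\mathbf{P}_{\textbf{Max-SR}}^\prime$ is the bilinear coupling $f_{k,j}p_{k,j}$ appearing in (\ref{eq:c_max_pwr}); fixing either factor eliminates this coupling and leaves a purely affine description of the remaining constraints. The statement then follows from the elementary fact that intersections of convex sets are convex. I would close the proof by combining the two cases and noting that, since the block index $i = t \bmod 2$ only ever selects either the $\mathbf{F}$-block or the $\mathbf{P}$-block, every subproblem encountered by the BSLM iteration falls into one of the two cases above, so convexity of the per-block feasible set holds at every iteration.
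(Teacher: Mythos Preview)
Your proposal is correct and follows essentially the same approach as the paper: both arguments observe that every constraint in $\mathbf{P}_{\textbf{Max-SR}}^\prime$ is affine in either block separately, with the only potentially problematic constraint being the bilinear power budget (\ref{eq:c_max_pwr}), which becomes linear once the other block is frozen. You simply spell out the two cases and the resulting polyhedral description in more detail than the paper does.
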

\begin{proof}
All constraints in $\mathbf{P}_{\textbf{Max-SR}}^\prime$ are linear functions of $\mathbf{F}$ and $\mathbf{P}$, with the exception of constraint (\ref{eq:c_max_pwr}) which involves a multi-linear term. However, if $\mathbf{F}$ and $\mathbf{P}$ are updated cyclically, only one of the matrices is updated while the other is kept constant and the multi-linear terms in (\ref{eq:c_max_pwr}) become linear in the variable being updated. Therefore, the feasible sets are convex.
\end{proof}
\begin{lemma}
\label{lemma:maxsr_approx}
Let the function
\begin{eqnarray}
\label{eq:approx_obj_sr}
&& \underset{k \in \mathcal{K}}{\sum} \ln \left( 1 + \frac{\underset{j \in \mathcal{J}}{\sum} |h_{k, j}|^2 f_{k, j} p_{k, j}}{\sigma_n^2} \right) + \nonumber\\
&&\gamma \left(\mathbf{F}^{\prime} \right) + \trace \left[ \nabla \gamma\left(\mathbf{F}^{\prime} \right)^T \left(\mathbf{F} - \mathbf{F}^{\prime} \right)\right],
\end{eqnarray}
where $\nabla \gamma\left(\mathbf{F}^{\prime} \right) \in \mathbb{R}^{K \times J}$ is a matrix such that
\begin{eqnarray}
\left[ \nabla \gamma\left(\mathbf{F^{\prime}} \right) \right]_{k,j} = \left. \frac{\partial \gamma\left(\mathbf{F} \right)}{\partial f_{k,j}} \right \vert_{\mathbf{F} = \mathbf{F^\prime}},
\end{eqnarray}
be an approximation to (\ref{eq:obj_sr_pen}) in the neighborhood of $\mathbf{F}^\prime$ for fixed $\mathbf{P}$.
%
Notice that (\ref{eq:approx_obj_sr}) is a lower bound concave approximation to (\ref{eq:obj_sr_pen}) satisfying conditions (\ref{eq:ass1})-(\ref{eq:ass4}). 
\end{lemma}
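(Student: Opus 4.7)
The plan is to verify the two qualitative claims (lower bound and concavity) together with the four BSUM conditions, exploiting the observation that the log-sum portion of (\ref{eq:obj_sr_pen}) is already well-behaved in $\mathbf{F}$ and only the penalty $\gamma(\mathbf{F})$ needs to be surrogated. Concretely, for fixed $\mathbf{P}$ each term $\ln\!\left(1 + \sigma_n^{-2}\sum_j |h_{k,j}|^2 f_{k,j} p_{k,j}\right)$ is the logarithm of an affine function of $\mathbf{F}$, hence concave, and summation preserves concavity. The offending piece in (\ref{eq:obj_sr_pen}) is $\gamma(\mathbf{F}) = \lambda \sum_{k,j}(f_{k,j}^2 - f_{k,j})$, which is a sum of univariate convex quadratics and is therefore convex in $\mathbf{F}$. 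Replacing $\gamma(\mathbf{F})$ by its first-order Taylor expansion at $\mathbf{F}'$, i.e.\ the affine surrogate $\gamma(\mathbf{F}') + \trace\!\left[\nabla\gamma(\mathbf{F}')^{T}(\mathbf{F}-\mathbf{F}')\right]$, gives an approximation that is the sum of a concave function and an affine one, and hence concave in $\mathbf{F}$.

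Next, I would check the four BSUM assumptions in turn. For (\ref{eq:ass1}), setting $\mathbf{F}' = \mathbf{F}$ kills the trace term and leaves $\gamma(\mathbf{F})$ plus the log-sum, recovering (\ref{eq:obj_sr_pen}) exactly. For (\ref{eq:ass2}), convexity of $\gamma$ gives the tangent-plane inequality $\gamma(\mathbf{F}) \geq \gamma(\mathbf{F}') + \trace\!\left[\nabla\gamma(\mathbf{F}')^{T}(\mathbf{F}-\mathbf{F}')\right]$; since the log-sum term is identical in (\ref{eq:approx_obj_sr}) and (\ref{eq:obj_sr_pen}), adding it to both sides yields $\tilde{f}(\mathbf{F},\mathbf{F}') \leq f(\mathbf{F})$ for every feasible $\mathbf{F}$. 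For (\ref{eq:ass3}), differentiating (\ref{eq:approx_obj_sr}) with respect to $f_{k,j}$ and evaluating at $\mathbf{F}'$ reproduces $\partial f/\partial f_{k,j}|_{\mathbf{F}'}$, because the Taylor linearization is, by construction, tangent to $\gamma$ at $\mathbf{F}'$, so it and $\gamma$ share the same partial derivatives there. For (\ref{eq:ass4}), continuity in $(\mathbf{F},\mathbf{F}')$ is immediate: the log-sum is continuous in $\mathbf{F}$, and $\gamma(\mathbf{F}')$, $\nabla\gamma(\mathbf{F}')$ depend continuously (in fact polynomially) on $\mathbf{F}'$.

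I do not anticipate a serious technical obstacle here, as the argument is essentially the standard concave majorization of a difference-of-convex function by linearizing the convex part. The one subtlety worth being explicit about is that these conditions only need to be verified for the $\mathbf{F}$ block: when $\mathbf{P}$ is updated with $\mathbf{F}$ fixed, the log-sum is already concave in $\mathbf{P}$ and $\gamma(\mathbf{F})$ is a constant, so no surrogate is needed, and the feasible set is convex by Lemma \ref{lemma:maxsr_cycle}. Thus the pair (log-sum unchanged, $\gamma$ linearized) satisfies the hypotheses of the BSLM framework of \cite{razaviyayn.2013}, which is exactly what is claimed.
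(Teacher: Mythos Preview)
Your proposal is correct and follows essentially the same approach as the paper: both recognize that the log-sum term is already concave in $\mathbf{F}$ for fixed $\mathbf{P}$, so only the convex penalty $\gamma(\mathbf{F})$ needs to be replaced by its affine first-order expansion at $\mathbf{F}'$, with the tangent-plane inequality from convexity yielding the lower bound. Your verification of conditions (\ref{eq:ass1})--(\ref{eq:ass4}) is in fact more explicit than the paper's, which dispatches (\ref{eq:ass1}), (\ref{eq:ass3}), (\ref{eq:ass4}) in one sentence by noting the surrogate is the first-order Taylor approximation, and then uses the convexity inequality for (\ref{eq:ass2}).
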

\begin{proof}
Firstly, notice that $\gamma \left(\mathbf{F}^{\prime} \right) + \trace \left[ \nabla \gamma\left(\mathbf{F}^{\prime} \right)^T \left(\mathbf{F} - \mathbf{F}^{\prime} \right)\right]$ is the first order linear approximation of $\gamma \left( \mathbf{F} \right)$ in the neighborhood of $\mathbf{F}^{\prime}$. %
So (\ref{eq:ass1}), (\ref{eq:ass3}), and (\ref{eq:ass4}) are satisfied. Furthermore, as $\gamma \left( \mathbf{F} \right)$ is a convex function, we have
\begin{eqnarray}
\gamma(\mathbf{F}) \geq \gamma \left(\mathbf{F}^{\prime} \right) + \trace \left[ \nabla \gamma\left(\mathbf{F}^{\prime} \right)^T \left(\mathbf{F} - \mathbf{F}^{\prime} \right)\right]. \nonumber
\end{eqnarray}
As the linear approximation is globally less than $\gamma \left( \mathbf{F} \right)$, we have that (\ref{eq:approx_obj_sr}) is a lower bound of (\ref{eq:obj_sr}). Thus, (\ref{eq:ass1}) is also satisfied.
\end{proof}
With Lemmas \ref{lemma:maxsr_cycle} and \ref{lemma:maxsr_approx} in hand, we can derive the convergence of the relaxed problem $\mathbf{P}_{\textbf{Max-SR}}^\prime$ to a local optimum, as stated in the Theorem below.
\begin{theorem}
\label{th:max_sr}
By updating $\mathbf{F}$ and $\mathbf{P}$ cyclically with the solutions to $\mathbf{P}_{\textbf{Max-SR}}^{(\mathbf{F})}$ and $\mathbf{P}_{\textbf{Max-SR}}^{(\mathbf{P})}$ presented below, we can obtain a locally optimal solution to the relaxed problem $\mathbf{P}_{\textbf{Max-SR}}^\prime$.
\begin{eqnarray}
\textbf{P}_{\textbf{Max-SR}}^{(\mathbf{P})}:  \quad 
\underset{{\mathbf{P}}}{\max} \quad (\ref{eq:obj_sr}) \quad 
\textbf{s.t.} \quad (\ref{eq:c_max_pwr}), \nonumber
\end{eqnarray}
\begin{eqnarray}
\mathbf{P}_{\textbf{Max-SR}}^{(\mathbf{F})}: \quad \underset{\mathbf{F}}{\max} \quad (\ref{eq:approx_obj_sr}) \quad
\textbf{s.t.} \quad (\ref{eq:c_max_res_per_user}), (\ref{eq:c_max_user_per_res}), (\ref{eq:c_max_pwr}), (\ref{eq:c_rlx_ch}), \nonumber 
\end{eqnarray}
where $\mathbf{F}^{\prime} = \mathbf{F}^{(t-1)}$, i.e the previous value of $\mathbf{F}$. 
\end{theorem}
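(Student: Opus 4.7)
The plan is to read Theorem~\ref{th:max_sr} as a direct specialization of the BSLM convergence result of \cite{razaviyayn.2013} to the two-block partition $\mathbf{x}=(\mathbf{F},\mathbf{P})$ of the relaxed problem $\mathbf{P}_{\textbf{Max-SR}}^\prime$. What then remains to check is that the hypotheses of the cited theorem hold for each of the two per-block subproblems, namely (i) the per-block feasible sets are closed and convex, and (ii) the per-block objectives used in the update are surrogates of the joint penalized objective that satisfy conditions (\ref{eq:ass1})--(\ref{eq:ass4}).

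First I would settle the structure of the feasible set. The only constraint that couples the blocks is the power budget (\ref{eq:c_max_pwr}), which is multi-linear in $(\mathbf{F},\mathbf{P})$; with the complementary block held fixed it becomes affine in the updated block, as already spelled out in Lemma~\ref{lemma:maxsr_cycle}. Combined with the box relaxation (\ref{eq:c_rlx_ch}) and the linear inequalities (\ref{eq:c_max_res_per_user})--(\ref{eq:c_max_user_per_res}), each per-block feasible set is a nonempty compact polytope, in particular closed and convex. Next, for the surrogates: for the $\mathbf{F}$-block this is exactly the content of Lemma~\ref{lemma:maxsr_approx}, which shows that (\ref{eq:approx_obj_sr}) is a concave lower bound of the penalized objective (\ref{eq:obj_sr_pen}) that matches it in value and gradient at the reference point $\mathbf{F}^\prime=\mathbf{F}^{(t-1)}$ and is jointly continuous, i.e.\ (\ref{eq:ass1})--(\ref{eq:ass4}) hold; concavity follows because the first term is a sum of logarithms of nonnegative affine functions of $f_{k,j}$ and the correction is affine in $\mathbf{F}$. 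For the $\mathbf{P}$-block, the penalty $\gamma(\mathbf{F})$ is a constant, so the penalized objective reduces up to this constant to (\ref{eq:obj_sr}), which is concave in $\mathbf{P}$ as a sum of terms of the form $\ln(1+\text{nonnegative linear})$; the exact objective can therefore be used as its own surrogate and (\ref{eq:ass1})--(\ref{eq:ass4}) are trivially satisfied.

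With the product feasible set convex and compact and both per-block surrogates satisfying the BSLM template, invoking the convergence theorem of \cite{razaviyayn.2013} yields that every limit point of the cyclic iterates $(\mathbf{F}^{(t)},\mathbf{P}^{(t)})$ is a coordinatewise stationary point of $\mathbf{P}_{\textbf{Max-SR}}^\prime$, which is the notion of local optimality asserted by the theorem. The main obstacle I anticipate is more bookkeeping than substance: ensuring that discarding the constant $\gamma(\mathbf{F}^{(t-1)})$ during the $\mathbf{P}$-step does not break the surrogate identity (\ref{eq:ass1}) on the joint variable, and that the penalty $\gamma$ does not destroy the smoothness and compactness needed by the cited convergence theorem. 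Both are handled by observing that $\gamma$ is a polynomial on a compact box, hence $C^\infty$ and bounded, so its linearization is globally well defined and each per-block problem admits a maximizer that can be plugged into the BSLM machinery without further care.
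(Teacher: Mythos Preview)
Your proposal is correct and follows essentially the same line as the paper's own proof: invoke Lemma~\ref{lemma:maxsr_cycle} for per-block convexity of the feasible set, Lemma~\ref{lemma:maxsr_approx} for the surrogate properties (\ref{eq:ass1})--(\ref{eq:ass4}) in the $\mathbf{F}$-block, note that the $\mathbf{P}$-block objective is already concave so it is its own surrogate, and then appeal to the BSLM convergence theorem of \cite{razaviyayn.2013}. The extra bookkeeping you add about compactness and smoothness of $\gamma$ is sound but not explicitly spelled out in the paper's shorter proof.
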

\begin{proof}
From Lemma \ref{lemma:maxsr_cycle}, the feasible set of both $\mathbf{P}_{\textbf{Max-SR}}^{(\mathbf{F})}$ and $\mathbf{P}_{\textbf{Max-SR}}^{(\mathbf{P})}$ are convex. Moreover, from Lemma \ref{lemma:maxsr_approx}, we have that (\ref{eq:approx_obj_sr}) is a concave lower bound approximation to (\ref{eq:obj_sr_pen}) satisfying the conditions in (\ref{eq:ass1})-(\ref{eq:ass4}). Therefore, from the result shown in Theorem 2 in\cite{razaviyayn.2013}, the solution obtained by iteratively updating $\mathbf{F}$ and $\mathbf{P}$ cyclically is a local optimum of $\mathbf{P}_{\textbf{Max-SR}}^\prime$.
\end{proof}
As problems $\mathbf{P}_{\textbf{Max-SR}}^{(\mathbf{F})}$ and $\mathbf{P}_{\textbf{Max-SR}}^{(\mathbf{P})}$ are concave maximizations over a convex set and are readily solvable. Algorithm \ref{algo:max_sr} shows the pseudocode of the Max-SR algorithm, using $\mathbf{P}_{\textbf{Max-SR}}^{(\mathbf{F})}$ and $\mathbf{P}_{\textbf{Max-SR}}^{(\mathbf{P})}$ as subroutines.

\begin{algorithm}[!t]
\label{algo:max_sr}
\SetAlgoLined
\SetKw{Variables}{Variable Definition}
\SetKw{Init}{Initialization}
\SetKw{Out}{Output}
\SetKwFor{ReturnForAll}{return for all}{do}{}
\Variables{
\begin{itemize}[leftmargin=0.5cm]
\item[1.] $\mathbf{F}^{(t)}$ is the subcarrier allocation matrix at the $t$-th iteration.
\item[2.] $\mathbf{P}^{(t)}$ is the power allocation matrix at the $t$-th iteration.
\end{itemize}
}
\Init{
\begin{itemize}[leftmargin=0.5cm]
\item[1.] Set the initial values for the power allocation matrix $\mathbf{P}^{(0)}$ randomly, within the set defined by constraint (\ref{eq:c_max_pwr}), and, the subcarrier allocation matrix $\mathbf{F}^{(0)}$ within the set defined by constraints (\ref{eq:c_max_res_per_user}), (\ref{eq:c_max_user_per_res}) and (\ref{eq:c_rlx_ch}).
\item[2.] Set the convergence tolerance for the subcarrier allocation $\epsilon_F$ and for the power allocation $\epsilon_P$.
\item[3.] $t \leftarrow 0$
\end{itemize} 
}
\Out{
\begin{itemize}[leftmargin=.5cm]
\item[1.] Optimized power allocation $\mathbf{P}^*$.
\item[2.] Optimized subcarrier allocation $\mathbf{F}^*$.
\end{itemize}
}
\While{$\norm{\mathbf{F}^{(t)} - \mathbf{F}^{(t-1)}} > \epsilon_F $ and $\norm{\mathbf{P}^{(t)} - \mathbf{P}^{(t-1)}} > \epsilon_P $}{
	$t \leftarrow t + 1$; \\
    $\mathbf{F^{(t)}} \leftarrow \arg \mathbf{P}^{(\mathbf{F})}_{\textbf{Max-SR}}\left(\mathbf{F}^{(t-1)}, \mathbf{P}^{(t-1)} \right)$; (see Theorem \ref{th:max_sr})\\
    $\mathbf{P^{(t)}} \leftarrow \arg \mathbf{P}^{(\mathbf{P})}_{\textbf{Max-SR}}\left(\mathbf{F}^{(t-1)}, \mathbf{P}^{(t-1)} \right)$; (see Theorem \ref{th:max_sr})
}
$\mathbf{P}^* \leftarrow \mathbf{P}^{(t)}$; \\
$\mathbf{F}^* \leftarrow \mathbf{F}^{(t)}$; 
\caption{Maximization of sum-rate}
\end{algorithm}

\subsection{Solving $\mathbf{P}_{\textbf{Max-Min}}$}

Before solving the problem, notice that its objective function in (\ref{eq:obj_mm}) can be rewritten as
\begin{eqnarray}
\label{eq:non_concave_obj}
&\underset{j \in \mathcal{J}}{\min}& \underset{k \in \mathcal{K}}{\sum} \ln \left( 1 + \frac{|h_{k,j}|^2 f_{k,j} p_{k,j}}{\sigma_n^2 + \overset{j-1}{\underset{i = 1}{\sum}}|h_{k,i}|^2 f_{k,i} p_{k,i}}  \right) \nonumber \\
=& \underset{j \in \mathcal{J}}{\min}& \left[ \underset{k \in \mathcal{K}}{\sum} \ln \left(\sigma_n^2 + \overset{j}{\underset{i = 1}{\sum}} |h_{k,i}|^2 f_{k,i} p_{k,i} \right) -  \right. \nonumber \\
&&  \left. \ln \left(\sigma_n^2 + \overset{j-1}{\underset{i = 1}{\sum}} |h_{k,i}|^2 f_{k,i} p_{k,i}  \right) \right].
\end{eqnarray}
Both expressions inside the minimum function, $\ln \left(\sigma_n^2 + \overset{j}{\underset{i = 1}{\sum}} |h_{k,i}|^2 f_{k,i} p_{k,i} \right)$ and $\ln \left(\sigma_n^2 + \overset{j-1}{\underset{i = 1}{\sum}} |h_{k,i}|^2 f_{k,i} p_{k,i}  \right)$ are concave functions, which implies that their difference is non-concave. The summation of non-concave functions is also non-concave, and, by function composition rules \cite{boyd.2004}, the minimum of a non-concave function is non-concave as well.

Similarly to $\mathbf{P}_{\textbf{Max-SR}}$, the first step in solving $\mathbf{P}_{\textbf{Max-Min}}$ is relaxing the integer constraint on the entries of $\mathbf{F}$ and adding the same penalty function to its objective, leading to problem $\mathbf{P}_{\textbf{Max-Min}}^\prime$, given below
\begin{eqnarray}
&&\mathbf{P}_{\textbf{Max-Min}}^\prime: \nonumber \\
\label{eq:obj_mm_pen}
&\underset{{\mathbf{P}, \mathbf{F}}}{\max}&  
\underset{j \in \mathcal{J}}{\min} \left[ \underset{k \in \mathcal{K}}{\sum} \ln \left(\sigma_n^2 + \overset{j}{\underset{i = 1}{\sum}} |h_{k,i}|^2 f_{k,i} p_{k,i} \right) -  \right. \nonumber \\
&&  \left. \ln \left(\sigma_n^2 + \overset{j-1}{\underset{i = 1}{\sum}} |h_{k,i}|^2 f_{k,i} p_{k,i}  \right) \right] + \gamma \left(\mathbf{F} \right) \\
&\textbf{s.t.}& (\ref{eq:c_max_res_per_user}), (\ref{eq:c_max_user_per_res}), (\ref{eq:c_max_pwr}), (\ref{eq:c_rlx_ch}). \nonumber
\end{eqnarray}
After the relaxation, we have the non-concave term from (\ref{eq:non_concave_obj}) and the convex penalty function in the objective. The challenges involved in solving $\mathbf{P}_{\textbf{Max-Min}}^\prime$ are
\begin{itemize}
\item Just as in $\mathbf{P}_{\textbf{Max-SR}}^\prime$, the presence of multi-linear terms of the form $f_{k,j} p_{k,j}$ in (\ref{eq:obj_mm_pen}) and (\ref{eq:c_max_pwr}).
\item Even if $\mathbf{F}$ and $\mathbf{P}$ are updated cyclically, the objective function in (\ref{eq:obj_sr_pen}) is non-concave.
\end{itemize}
To address these issues, we follow the procedure used to solve $\mathbf{P}_{\textbf{Max-SR}}^\prime$, with a few extra steps. Firstly, notice that the feasible sets of $\mathbf{P}_{\textbf{Max-SR}}^\prime$ and $\mathbf{P}_{\textbf{Max-Min}}^\prime$ are the same, and therefore Lemma \ref{lemma:maxsr_cycle} also holds. We introduce a concave lower bound to (\ref{eq:obj_mm_pen}) in the Lemma below.

\begin{lemma}
\label{lemma:maxmin_approx_F}
Let $\theta_j \left( \mathbf{F} \right) = \underset{k \in \mathcal{K}}{\sum} \ln \left(\sigma_n^2 + \overset{j-1}{\underset{i = 1}{\sum}} |h_{k,i}|^2 f_{k,i} p_{k,i} \right)$. The function
\begin{eqnarray}
\label{eq:maxmin_approx_F}
&& \underset{j \in \mathcal{J}}{\min} \left[ \underset{k \in \mathcal{K}}{\sum} \ln \left(\sigma_n^2 + \overset{j}{\underset{i = 1}{\sum}}|h_{k,i}|^2 f_{k,i} p_{k,i} \right) -  \nonumber \right. \\ 
&&\left. \vphantom{\overset{j}{\underset{i = 1}{\sum}}} \theta_j \left( \mathbf{F}^{\prime} \right) - \trace \left( \nabla \theta_j \left( \mathbf{F}^{\prime} \right)^T \left( \mathbf{F} - \mathbf{F}^{\prime}\right) \right) \right] + \nonumber \\
&& \gamma \left( \mathbf{F^\prime} \right) + \trace \left[ \nabla \gamma\left(\mathbf{F}^{\prime} \right)^T \left(\mathbf{F} - \mathbf{F}^{\prime} \right)\right],
\end{eqnarray}
where $\nabla \theta_j \left( \mathbf{F}^{\prime} \right) \in \mathbb{R}^{K \times J}$ is a matrix such that
\begin{eqnarray}
\label{eq:cvxfy_F_grad}
\left[ \nabla \theta_j \left( \mathbf{F}^{\prime} \right) \right]_{k,n} &=& \left. \frac{\partial \theta_j \left( \mathbf{F} \right)}{\partial f_{k,n}} \right \vert_{\mathbf{F} = \mathbf{F^\prime}}  \nonumber \\ 
&=&  
\begin{cases}
\frac{|h_{k,n}|^2 p_{k,n}}{\sigma_n^2 + \overset{j-1}{\underset{i = 1}{\sum}}|h_{k,i}|^2 f^{\prime}_{k,i} p_{k,i}} &\text{ , } \forall \text{ } n < j \\
0 &\text{ , otherwise}
\end{cases}, \nonumber
\end{eqnarray}
is a lower bound concave approximation of (\ref{eq:obj_mm_pen}) for fixed $\mathbf{P}$ in the neighborhood of $\mathbf{F^\prime}$ which satisfies conditions (\ref{eq:ass1})-(\ref{eq:ass4}).
\end{lemma}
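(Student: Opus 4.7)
The plan is to view (\ref{eq:maxmin_approx_F}) as being obtained from (\ref{eq:obj_mm_pen}) by replacing each piece that is not concave in $\mathbf{F}$ (for fixed $\mathbf{P}$) by its first-order Taylor expansion at $\mathbf{F}^\prime$, and then to check conditions (\ref{eq:ass1})--(\ref{eq:ass4}) in turn. With $\mathbf{P}$ fixed, $\sigma_n^2 + \sum_{i=1}^{j} |h_{k,i}|^2 f_{k,i} p_{k,i}$ is affine in $\mathbf{F}$, so the leading logarithm of each bracketed summand is already concave and needs no surrogate. The two pieces that are not concave are $-\theta_j(\mathbf{F})$ inside the minimum (the negation of a concave $\log$-of-affine sum, hence convex) and the penalty $\gamma(\mathbf{F})$ outside the minimum (convex by construction since $f^2-f$ is convex). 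These are precisely the two functions linearized in (\ref{eq:maxmin_approx_F}).

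First I would establish the lower-bound property (\ref{eq:ass2}). Concavity of $\theta_j$ gives the first-order upper bound $\theta_j(\mathbf{F}) \le \theta_j(\mathbf{F}^\prime) + \trace\!\left(\nabla \theta_j(\mathbf{F}^\prime)^T (\mathbf{F} - \mathbf{F}^\prime)\right)$, so replacing $-\theta_j(\mathbf{F})$ by minus its linearization yields a quantity no larger than $-\theta_j(\mathbf{F})$ itself. Dually, convexity of $\gamma$ gives the first-order lower bound $\gamma(\mathbf{F}) \ge \gamma(\mathbf{F}^\prime) + \trace\!\left(\nabla \gamma(\mathbf{F}^\prime)^T (\mathbf{F} - \mathbf{F}^\prime)\right)$. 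Since each argument of the outer $\min_j$ has been replaced by a lower bound, the pointwise minimum is bounded below accordingly, and adding the outside lower bound for $\gamma$ preserves the inequality; this gives (\ref{eq:ass2}).

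Next, concavity of the surrogate (which together with Lemma~\ref{lemma:maxsr_cycle} makes the $\mathbf{F}$-update a tractable concave maximization over a convex set) follows because each bracketed term is the sum of a concave $\log$ and an affine function, the pointwise minimum of finitely many concave functions is concave, and the remaining additive $\gamma$-linearization is affine. Conditions (\ref{eq:ass1}) and (\ref{eq:ass3}) are then immediate from the defining property of first-order Taylor expansions: the linearizations of $\theta_j$ and of $\gamma$ agree with the originals in both value and gradient at $\mathbf{F}^\prime$, while the untouched concave pieces trivially agree with themselves, so the surrogate coincides with (\ref{eq:obj_mm_pen}) in value and gradient at $\mathbf{F}^\prime$. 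Condition (\ref{eq:ass4}) is clear since every ingredient, including the pointwise $\min$ over a finite index set of continuous functions, is jointly continuous in $(\mathbf{F}, \mathbf{F}^\prime)$.

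The main subtlety I expect is condition (\ref{eq:ass3}) at points where the $\min_j$ in (\ref{eq:obj_mm_pen}) is attained by more than one index, because the original objective is then only directionally differentiable. My plan is to interpret $\nabla f$ in the sense of directional derivatives (equivalently, of the Clarke subdifferential), and to observe that since every inner term of the surrogate agrees with the corresponding term of (\ref{eq:obj_mm_pen}) in value and gradient at $\mathbf{F}^\prime$, the set of minimizing indices, and the directional derivatives along every feasible direction, coincide for (\ref{eq:maxmin_approx_F}) and (\ref{eq:obj_mm_pen}) at $\mathbf{F}^\prime$. This is the generalized form of (\ref{eq:ass3}) on which the convergence result of \cite{razaviyayn.2013} actually rests, which completes the verification and yields the lemma.
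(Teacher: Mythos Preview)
Your proposal is correct and follows essentially the same route as the paper: linearize the concave $\theta_j$ (so that $-\theta_j$ is replaced by an affine lower bound) and the convex $\gamma$ at $\mathbf{F}^\prime$, then invoke concavity of $\log$-of-affine, the pointwise-minimum rule, and first-order agreement to verify (\ref{eq:ass1})--(\ref{eq:ass4}). Your treatment is in fact more careful than the paper's, which leaves the $\gamma$-linearization implicit in this lemma and does not discuss the non-smoothness of the outer $\min$ at tie points; your reading of (\ref{eq:ass3}) in the directional-derivative sense is the right way to close that gap and is consistent with how \cite{razaviyayn.2013} states the assumption.
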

\begin{proof}
Firstly, as $\theta_j \left( \mathbf{F} \right)$ is a concave function we have that
\begin{equation*}
\theta_j \left( \mathbf{F} \right) \leq \theta_j \left( \mathbf{F}^{\prime} \right) - \trace \left( \nabla \theta_j \left( \mathbf{F}^{\prime} \right)^T \left( \mathbf{F} - \mathbf{F}^{\prime}\right) \right)
\end{equation*}
Therefore, the argument of the minimum function in (\ref{eq:maxmin_approx_F}) is less than the argument of the minimum function in (\ref{eq:obj_mm_pen}), which implies that (\ref{eq:ass2}) holds. Furthermore, (\ref{eq:maxmin_approx_F}) is obtained by approximating the non concave terms in (\ref{eq:obj_mm_pen}) by their first order linear approximation in the neighborhood of $\mathbf{F^\prime}$ and (\ref{eq:maxmin_approx_F}) is continuous, and hence (\ref{eq:ass1}), (\ref{eq:ass3}), and (\ref{eq:ass4}) also hold. Finally, (\ref{eq:maxmin_approx_F}) consists of the minimum of the summation over a concave function plus an affine term. So, 
\begin{eqnarray}
&\underset{k \in \mathcal{K}}{\sum}& \ln \left(\sigma_n^2 + \overset{j}{\underset{i = 1}{\sum}}|h_{k,i}|^2 f_{k,i} p_{k,i} \right) - \nonumber \\
&&\theta_j \left( \mathbf{F}^{\prime} \right) - \trace \left( \nabla \theta_j \left( \mathbf{F}^{\prime} \right)^T \left( \mathbf{F} - \mathbf{F}^{\prime}\right) \right), \nonumber
\end{eqnarray}
is concave. By composition rules \cite{boyd.2004}, the minimum of concave functions is also concave, which proves that (\ref{eq:maxmin_approx_F}) is concave, and coimpletes the proof.
\end{proof}

\begin{lemma}
\label{lemma:maxmin_approx_P}
Let $\theta_j \left( \mathbf{P} \right) = \underset{k \in \mathcal{K}}{\sum} \ln \left(\sigma_n^2 + \overset{j-1}{\underset{i = 1}{\sum}} |h_{k,i}|^2 f_{k,i} p_{k,i} \right)$. The function
\begin{eqnarray}
\label{eq:maxmin_approx_P}
&& \underset{j \in \mathcal{J}}{\min} \left[ \underset{k \in \mathcal{K}}{\sum} \ln \left(\sigma_n^2 + \overset{j}{\underset{i = 1}{\sum}}|h_{k,i}|^2 f_{k,i} p_{k,i} \right) -  \nonumber \right. \\ 
&&\left. \vphantom{\overset{j}{\underset{i = 1}{\sum}}} \theta_j \left( \mathbf{P}^{\prime} \right) - \trace \left( \nabla \theta_j \left( \mathbf{P}^{\prime} \right)^T \left( \mathbf{P} - \mathbf{P}^{\prime}\right) \right) \right] + \nonumber \\
&& \gamma \left( \mathbf{P^\prime} \right) + \trace \left[ \nabla \gamma\left(\mathbf{P}^{\prime} \right)^T \left(\mathbf{P} - \mathbf{P}^{\prime} \right)\right],
\end{eqnarray}
where $\nabla \theta_j \left( \mathbf{P}^{\prime} \right) \in \mathbb{R}^{K \times J}$ is a matrix such that
\begin{eqnarray}
\label{eq:cvxfy_P_grad}
\left[ \nabla \theta_j \left( \mathbf{P}^{\prime} \right) \right]_{k,n} &=& \left. \frac{\partial \theta_j \left( \mathbf{P} \right)}{\partial p_{k,n}} \right \vert_{\mathbf{P} = \mathbf{P^\prime}}  \nonumber \\ 
&=&  
\begin{cases}
\frac{|h_{k,n}|^2 f_{k,n}}{\sigma_n^2 + \overset{j-1}{\underset{i = 1}{\sum}}|h_{k,i}|^2 f^{\prime}_{k,i} p_{k,i}} &\text{ , } \forall \text{ } n < j \\
0 &\text{ , otherwise}
\end{cases}, \nonumber
\end{eqnarray}
is a lower bound concave approximation of (\ref{eq:obj_mm_pen}) for fixed $\mathbf{F}$ in the neighborhood of $\mathbf{P^\prime}$ which satisfies conditions (\ref{eq:ass1})-(\ref{eq:ass4}).
\end{lemma}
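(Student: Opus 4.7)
My plan is to mirror the argument used for Lemma \ref{lemma:maxmin_approx_F}, since Lemma \ref{lemma:maxmin_approx_P} is its exact counterpart with the roles of $\mathbf{F}$ and $\mathbf{P}$ interchanged. The central observation is that for fixed $\mathbf{F}$, the function $\theta_j(\mathbf{P}) = \sum_{k \in \mathcal{K}} \ln\bigl(\sigma_n^2 + \sum_{i=1}^{j-1} |h_{k,i}|^2 f_{k,i} p_{k,i}\bigr)$ is concave in $\mathbf{P}$, because it is a sum of logarithms of affine functions of the entries $p_{k,i}$. By the first-order characterization of concavity, its tangent hyperplane at $\mathbf{P}'$ globally upper bounds $\theta_j$, that is,
\begin{equation*}
\theta_j(\mathbf{P}) \leq \theta_j(\mathbf{P}') + \trace\!\bigl(\nabla \theta_j(\mathbf{P}')^T (\mathbf{P}-\mathbf{P}')\bigr).
\end{equation*}
Substituting this upper bound for the subtracted term in the original objective \eqref{eq:obj_mm_pen} produces the expression (\ref{eq:maxmin_approx_P}), which is therefore a pointwise lower bound on the original objective, verifying condition (\ref{eq:ass2}).

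Next I would check the remaining conditions one by one. Tangency at $\mathbf{P}=\mathbf{P}'$ (\ref{eq:ass1}) follows because the linearized term reduces to $\theta_j(\mathbf{P}')$, which equals the value of the exact subtracted term. The gradient condition (\ref{eq:ass3}) holds because the gradient of an affine approximation of a smooth function at the expansion point coincides with the true gradient, and all remaining summands are identical in both expressions. Continuity (\ref{eq:ass4}) is immediate, since (\ref{eq:maxmin_approx_P}) is the minimum over a finite family of compositions of continuous functions. The only bookkeeping subtlety is the penalty term: since $\gamma$ depends on $\mathbf{F}$ alone, the linearization at $\mathbf{P}'$ in (\ref{eq:maxmin_approx_P}) is to be understood as a constant with respect to the optimization variable $\mathbf{P}$, which neither breaks the lower-bounding property nor affects the tangency and gradient conditions.

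Finally, I would establish concavity of the resulting surrogate. Each expression inside the minimum is a sum of $\ln(\sigma_n^2+\sum_{i=1}^{j}|h_{k,i}|^2 f_{k,i} p_{k,i})$ over $k$, which is concave in $\mathbf{P}$, minus an affine function of $\mathbf{P}$, which preserves concavity. Adding the affine linearization of $\gamma$ again preserves concavity. By the standard composition rule in \cite{boyd.2004}, the pointwise minimum of a finite collection of concave functions is concave, so (\ref{eq:maxmin_approx_P}) is a concave function of $\mathbf{P}$.

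I do not anticipate a hard step here: the proof is essentially a transcription of Lemma \ref{lemma:maxmin_approx_F} with differentiation carried out with respect to $p_{k,n}$ rather than $f_{k,n}$, which explains the form of the partial derivatives stated in (\ref{eq:cvxfy_P_grad}) (with $f_{k,n}$ now playing the role of the multiplicative constant). The only point requiring care is ensuring that the entries of $\nabla \theta_j(\mathbf{P}')$ vanish for $n \geq j$, as stated; this follows directly because $p_{k,n}$ for $n \geq j$ does not appear in $\theta_j(\mathbf{P})$.
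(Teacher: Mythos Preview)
Your proposal is correct and follows exactly the route the paper takes: the paper's proof of Lemma~\ref{lemma:maxmin_approx_P} simply reads ``See the proof of Lemma~\ref{lemma:maxmin_approx_F},'' and your write-up is precisely that transcription with $\mathbf{F}$ and $\mathbf{P}$ interchanged. Your added remark that the penalty $\gamma$ depends only on $\mathbf{F}$ and is therefore constant in the $\mathbf{P}$-subproblem is a useful clarification that the paper leaves implicit.
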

\begin{proof}
See the proof of Lemma \ref{lemma:maxmin_approx_F}.
\end{proof}
With the results from Lemmas \ref{lemma:maxsr_cycle}, \ref{lemma:maxmin_approx_F} and \ref{lemma:maxmin_approx_P} we can establish the convergence to a local optimum of an algorithm to solve $\mathbf{P}_{\textbf{Max-Min}}^\prime$.

\begin{theorem}
\label{th:max_min}
By updating $\mathbf{F}$ and $\mathbf{P}$ cyclically with the solutions to $\mathbf{P}_{\textbf{Max-Min}}^{(\mathbf{F})}$ and $\mathbf{P}_{\textbf{Max-Min}}^{(\mathbf{P})}$ presented below, we can obtain a locally optimal solution to the relaxed problem $\mathbf{P}_{\textbf{Max-Min}}^\prime$.
\begin{eqnarray}
\textbf{P}_{\textbf{Max-Min}}^{(\mathbf{P})}:  \quad 
\underset{{\mathbf{P}}}{\max} \quad (\ref{eq:maxmin_approx_P}) \quad 
\textbf{s.t.} \quad (\ref{eq:c_max_pwr}), \nonumber
\end{eqnarray}
where $\mathbf{P}^{\prime}$ is the value of $\mathbf{P}$ after the previous update. 
\begin{eqnarray}
\mathbf{P}_{\textbf{Max-Min}}^{(\mathbf{F})}: \quad \underset{\mathbf{F}}{\max} \quad (\ref{eq:maxmin_approx_F}) \quad
\textbf{s.t.} \quad (\ref{eq:c_max_res_per_user}), (\ref{eq:c_max_user_per_res}), (\ref{eq:c_max_pwr}), (\ref{eq:c_rlx_ch}), \nonumber 
\end{eqnarray}
where $\mathbf{F}^{\prime}$ is the value of $\mathbf{F}$ after the previous update. 
\end{theorem}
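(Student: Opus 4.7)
The plan is to mirror the argument used for Theorem \ref{th:max_sr} almost verbatim, invoking the three lemmas that have just been established and then appealing to Theorem 2 of \cite{razaviyayn.2013} as the black box that converts the block-wise lower-bound approximations into a convergence guarantee for the cyclic update scheme.

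First, I would observe that the feasible set of $\mathbf{P}_{\textbf{Max-Min}}^\prime$ is identical to that of $\mathbf{P}_{\textbf{Max-SR}}^\prime$, since only the objective differs between the two problems. Therefore Lemma \ref{lemma:maxsr_cycle} applies directly: when $\mathbf{F}$ is held fixed, the multi-linear term in (\ref{eq:c_max_pwr}) becomes linear in $\mathbf{P}$, so the feasible set of $\mathbf{P}_{\textbf{Max-Min}}^{(\mathbf{P})}$ is convex, and symmetrically for $\mathbf{P}_{\textbf{Max-Min}}^{(\mathbf{F})}$. Moreover, both blocks $\mathcal{X}_{\mathbf{F}}$ and $\mathcal{X}_{\mathbf{P}}$ are closed, and their Cartesian product recovers the full feasible region of $\mathbf{P}_{\textbf{Max-Min}}^\prime$, matching the product structure required by the BSUM framework.

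Next I would invoke Lemmas \ref{lemma:maxmin_approx_F} and \ref{lemma:maxmin_approx_P}. Lemma \ref{lemma:maxmin_approx_F} certifies that (\ref{eq:maxmin_approx_F}) is a concave lower bound of (\ref{eq:obj_mm_pen}) in the $\mathbf{F}$-block that satisfies the four conditions (\ref{eq:ass1})--(\ref{eq:ass4}), and Lemma \ref{lemma:maxmin_approx_P} gives the analogous statement for the $\mathbf{P}$-block via (\ref{eq:maxmin_approx_P}). Consequently, at iteration $t$, the subproblem $\mathbf{P}_{\textbf{Max-Min}}^{(\mathbf{F})}$ (respectively $\mathbf{P}_{\textbf{Max-Min}}^{(\mathbf{P})}$) is a concave maximization over a convex set, which admits a well-defined maximizer and is efficiently solvable.

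With both geometric and analytic ingredients in place, I would close the argument by citing Theorem 2 of \cite{razaviyayn.2013}: any BSLM scheme whose per-block surrogates satisfy (\ref{eq:ass1})--(\ref{eq:ass4}) over a Cartesian product of closed convex sets produces an iterate sequence whose every limit point is a stationary point of the original (non-convex) problem, i.e.\ a locally optimal solution of $\mathbf{P}_{\textbf{Max-Min}}^\prime$. The main subtlety — and what I would flag as the key point to verify carefully — is that the objective of $\mathbf{P}_{\textbf{Max-Min}}^\prime$ is non-smooth due to the outer $\min_{j \in \mathcal{J}}$, so the ``gradient consistency'' condition (\ref{eq:ass3}) must be interpreted in the sense of directional derivatives or Clarke subgradients; since the surrogates inside the $\min$ are the first-order expansions of the corresponding inner functions and the minimum is preserved pointwise, the directional-derivative version of (\ref{eq:ass3}) is inherited and the BSUM convergence result remains applicable.
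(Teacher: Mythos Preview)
Your proposal is correct and follows essentially the same route as the paper's own proof: invoke Lemma~\ref{lemma:maxsr_cycle} for convexity of the per-block feasible sets, Lemmas~\ref{lemma:maxmin_approx_F} and~\ref{lemma:maxmin_approx_P} for the concave lower-bound surrogates satisfying (\ref{eq:ass1})--(\ref{eq:ass4}), and then appeal to the BSUM convergence result in~\cite{razaviyayn.2013}. Your additional remark about interpreting (\ref{eq:ass3}) via directional derivatives because of the outer $\min$ is a welcome technical refinement that the paper does not spell out.
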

\begin{proof}
From Lemma \ref{lemma:maxsr_cycle}, the domains of $\mathbf{P}_{\textbf{Max-SR}}^{(\mathbf{F})}$ and $\mathbf{P}_{\textbf{Max-SR}}^{(\mathbf{P})}$ are convex. Also, from Lemmas \ref{lemma:maxmin_approx_F} and \ref{lemma:maxmin_approx_P}, we have that (\ref{eq:maxmin_approx_F}) and (\ref{eq:maxmin_approx_P}) are concave lower bound approximations to (\ref{eq:obj_sr_pen}) satisfying the conditions in (\ref{eq:ass1})-(\ref{eq:ass4}). Therefore, from the result in \cite{razaviyayn.2013}, the solution obtained by iteratively updating $\mathbf{F}$ and $\mathbf{P}$ cyclically is a local optimum of $\mathbf{P}_{\textbf{Max-SR}}^\prime$.
\end{proof}
The complete algorithm to solve the Min-Max problem, $\textbf{P}_{\textbf{Max-Min}}$, is described in Algorithm \ref{algo:max_min}. The algorithm uses $\mathbf{P}_{\textbf{Max-SR}}^{(\mathbf{F})}$ and $\mathbf{P}_{\textbf{Max-SR}}^{(\mathbf{P})}$ as subroutines.

\begin{algorithm}[!t]
\label{algo:max_min}
\SetAlgoLined
\SetKw{Variables}{Variable Definition}
\SetKw{Init}{Initialization}
\SetKw{Out}{Output}
\SetKwFor{ReturnForAll}{return for all}{do}{}
\Variables{
\begin{itemize}[leftmargin=0.5cm]
\item[1.] $\mathbf{F}^{(t)}$ is the subcarrier allocation matrix at the $t$-th iteration.
\item[2.] $\mathbf{P}^{(t)}$ is the power allocation matrix at the $t$-th iteration.
\end{itemize}
}
\Init{
\begin{itemize}[leftmargin=0.5cm]
\item[1.] Set the initial values for the power allocation matrix $\mathbf{P}^{(0)}$ randomly, within the set defined by constraint (\ref{eq:c_max_pwr}), and, the subcarrier allocation matrix $\mathbf{F}^{(0)}$ within the set defined by constraints (\ref{eq:c_max_res_per_user}), (\ref{eq:c_max_user_per_res}) and (\ref{eq:c_rlx_ch}).
\item[2.] Set the convergence tolerance for the subcarrier allocation $\epsilon_F$ and for the power allocation $\epsilon_P$.
\item[3.] $t \leftarrow 0$
\end{itemize} 
}
\Out{
\begin{itemize}[leftmargin=.5cm]
\item[1.] Optimized power allocation $\mathbf{P}^*$.
\item[2.] Optimized subcarrier allocation $\mathbf{F}^*$.
\end{itemize}
}
\While{$\norm{\mathbf{F}^{(t)} - \mathbf{F}^{(t-1)}} > \epsilon_F $ and $\norm{\mathbf{P}^{(t)} - \mathbf{P}^{(t-1)}} > \epsilon_P $}{
	$t \leftarrow t + 1$; \\
    $\mathbf{F^{(t)}} \leftarrow \arg \mathbf{P}^{(\mathbf{F})}_{\textbf{Max-Min}}\left(\mathbf{F}^{(t-1)}, \mathbf{P}^{(t-1)} \right)$; (see Theorem \ref{th:max_min})\\
    $\mathbf{P^{(t)}} \leftarrow \arg \mathbf{P}^{(\mathbf{P})}_{\textbf{Max-Min}}\left(\mathbf{F}^{(t-1)}, \mathbf{P}^{(t-1)} \right)$; (see Theorem \ref{th:max_min})
}
$\mathbf{P}^* \leftarrow \mathbf{P}^{(t)}$; \\
$\mathbf{F}^* \leftarrow \mathbf{F}^{(t)}$; 
\caption{Fairness Maximization}
\end{algorithm}

\section{Numerical Results}
\label{sec:results}
In this section the performance of the algorithms proposed in Section \ref{sec:analysis} is presented. Additionally, we compare our results with the three algorithms proposed in \cite{dabiri.2018}: the fixed user order (FUO), opportunistic allocation (OA) and proportional fair (PF) algorithms. In these three algorithms, the users are sorted according to a different criteria, and the resources are allocated sequentially by picking the best available resource for the user in the sorted order. The FUO algorithm performs the allocation in a random order, while the OA algorithm sorts the users according to their overall channel qualities, prior to the channel allocation. The PF algorithm takes into account the $L$ past channel qualities when sorting the allocation order in order to improve fairness. In our evaluation we consider $L=10$ and $\alpha = 0.9$.

We consider a scenario where one BS is serving $6$ users over $4$ subcarriers, with $N=2$ and $d_f=3$, in a circular cell of radius $R=300$ m and the users are uniformly distributed inside the cell. It is worth mentioning that an increase in $N$ would result in higher diversity, as each user would transmit its signal on more subcarriers. However, as the value of $d_f$ is tied to $N$ (i.e. $d_f = {{K-1}\choose{N-1}}$), $d_f$ would also increase, resulting in an exponential increase in the decoding complexity. We consider a path loss exponent $\alpha = 4$. We consider a noise power density of $-174 \text{ dBm/Hz}$ and a bandwidth of $180 \text{ kHz}$. Also, we consider a normalized slow fading Rayleigh channel, such that the channel remains constant for the duration of each transmitted symbol. Furthermore, we simulate the algorithms' performance for a maximum transmit power per user varying between $3 \text{ dBm}$ and $10 \text{ dBm}$. We evaluate the performances according to two metrics: the sum-rate and the Jain's fairness index \cite{lan.2010}. The former, is a measure of the overall achievable throughput of the network and the latter is a measure of the fairness of the resource allocation between the users based on their individual achievable throughputs. Let $\mathbf{c} \in \mathbb{R}^{J}$ be a vector, such that, its $i$-th coordinate, $c_i$, corresponds to the throughput of the $i$-th user. The Jain's fairness index for a given rate vector, $\mathbf{c} = [c_1, \cdots, r_J]^T$, is 
\begin{equation}
J(\mathbf{c}) = \frac{\left( \underset{j=1}{\overset{J}{\sum}} c_j \right)^2}{J \underset{j=1}{\overset{J}{\sum}} c_j^2}.
\end{equation}
This index varies from $\frac{1}{J}$, meaning no fairness, to $1$, meaning perfect fairness. Furthermore, we consider a normalized Rayleigh fading channel, and the performances are averaged over several channel realizations.

To implement the proposed algorithms, we used the convex optimization modeling language CVXPY \cite{cvxpy,cvxpy_rewriting}, with the open-source ECOS solver \cite{domahidi.2013}, due to its support of exponential cones \cite{akle.2015}. The comparison of the sum-rate of the five algorithms is shown in Figure \ref{fig:sum-rate}. The sum-rate of the Max-SR algorithm outperforms the Max-Min, FUO, OA, and PF algorithms for the whole range of transmitted power evaluated. The three algorithms proposed in \cite{dabiri.2018} present similar performances with the proportional fairness being slightly worse than the other two. Finally, the Max-Min algorithm is greatly outperformed by the other ones. This result is expected since the Max-Min gives up on maximizing the sum-rate in favor of improving the fairness.
\begin{figure}
\centering {
  \ifCLASSOPTIONtwocolumn 
		\hspace*{-.7cm}\includegraphics[width=1.15\columnwidth]{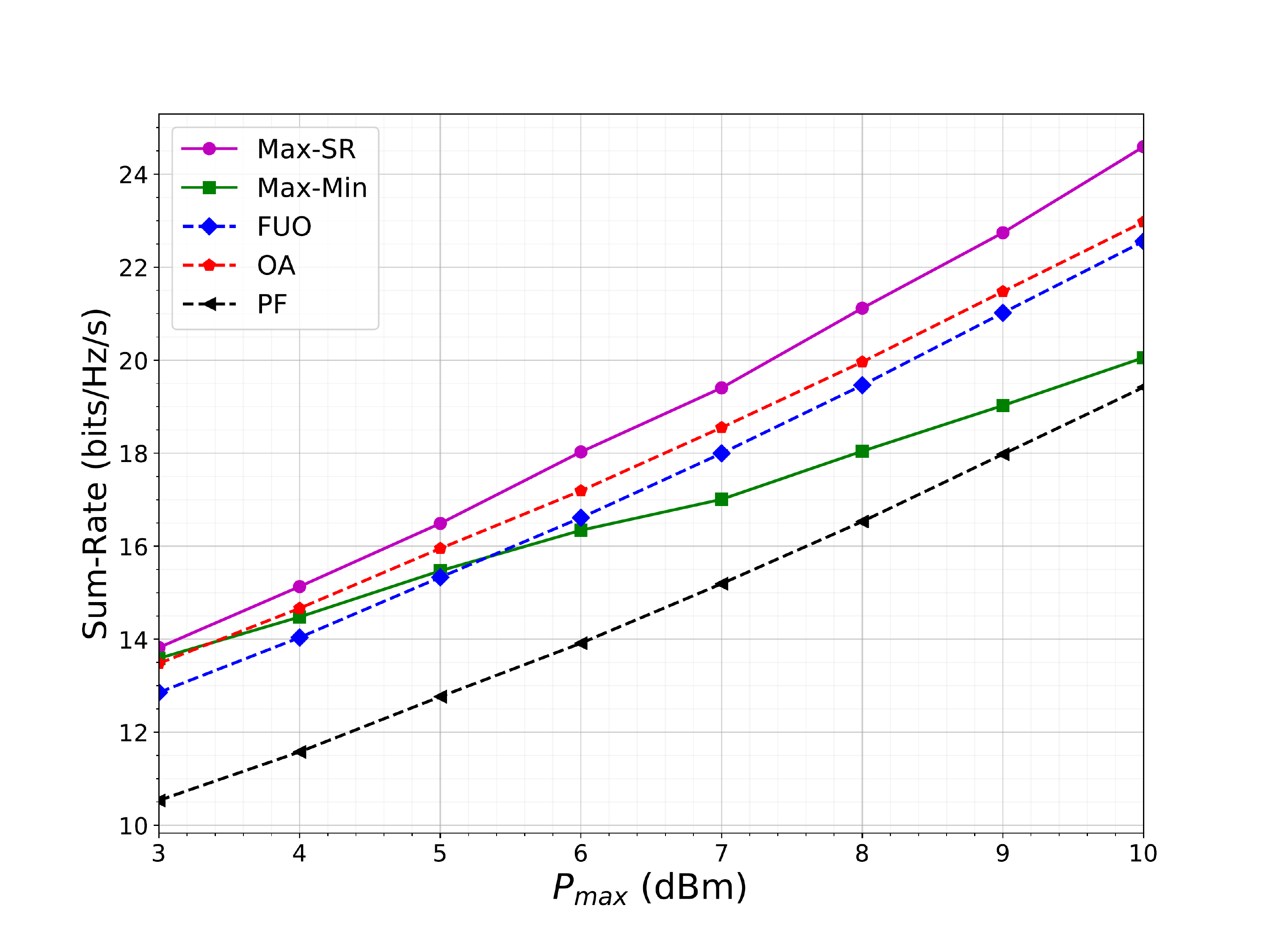}\vspace*{-0.6cm}
  \else
    	\includegraphics[width=0.7\columnwidth]{figs/sumrate_comp-eps-converted-to.pdf}
  \fi

\caption{Sum-rate comparison for $J=6$, $K=4$, $d_f=3$ and $N=2$. \label{fig:sum-rate}}
}
\end{figure}

Figure \ref{fig:jain} shows the Jain's fairness index achieved by each algorithm. The Max-Min algorithm greatly outperforms the alternatives for the whole range of transmitted powers. It is worth noting that the Jain's fairness index is bottlenecked by the rate of the user with the worst channel. Therefore, an increase in the maximum transmit power results in a higher throughput for the worst user, consequently, increasing the overall fairness. Furthermore, with increasing maximum transmit power, the fairness of the Max-Min algorithm approaches one. The other algorithms achieve similar fairness performance, with the PF algorithm achieving slightly better fairness than the others.
\begin{figure}
\centering {
  \ifCLASSOPTIONtwocolumn 
		\hspace*{-.3cm}\includegraphics[width=1.15\columnwidth]{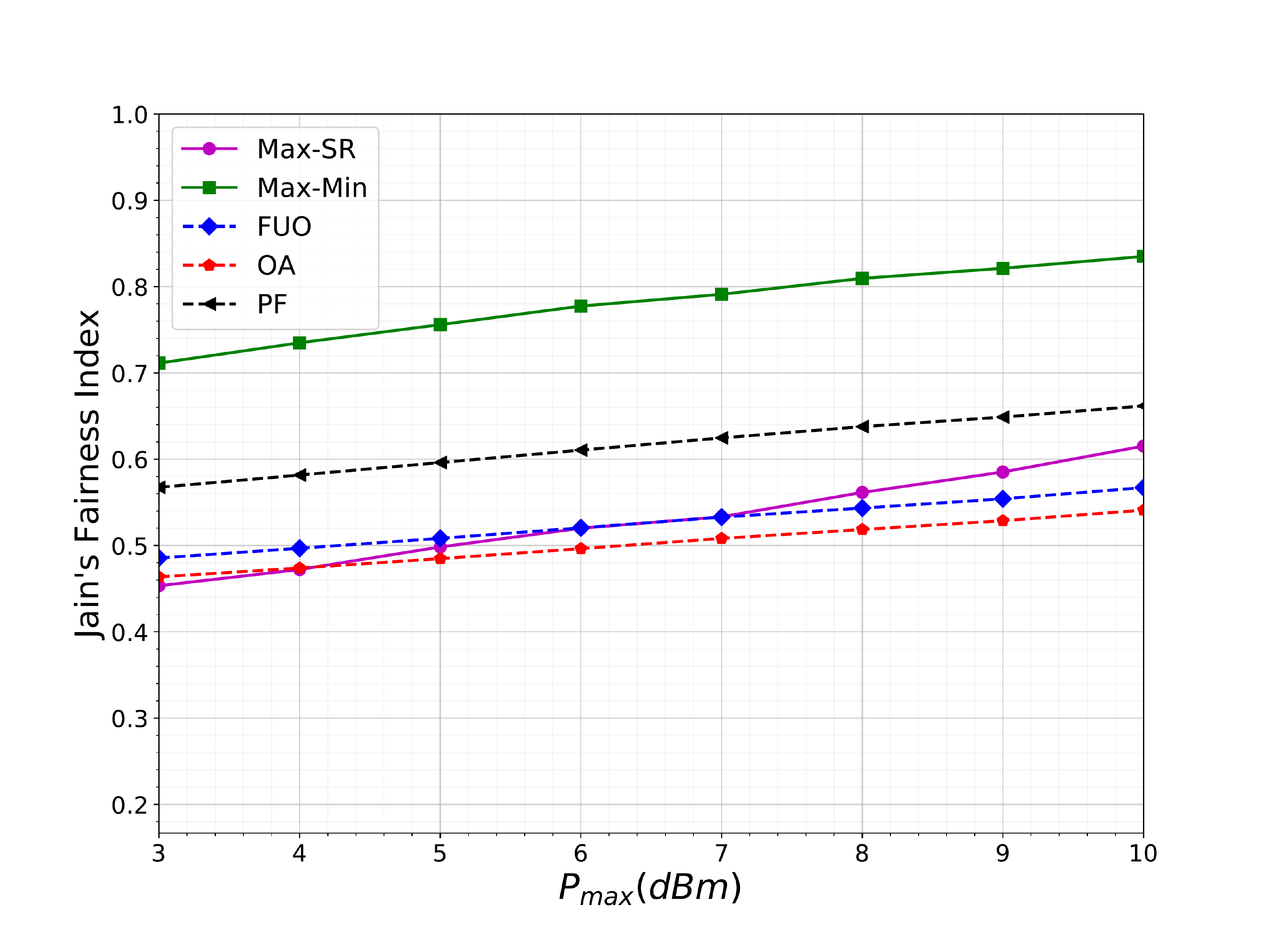}\vspace*{-0.6cm}
  \else
    	\includegraphics[width=.7\columnwidth]{figs/jain_comp-eps-converted-to.pdf}
  \fi

\caption{Jain's fairness index comparison for $J=6$, $K=4$, $d_f=3$ and $N=2$. \label{fig:jain}}
}
\end{figure}

In order to evaluate the link level performance of the Max-SR and Max-Min allocation, another simulation, evaluating the BER of both allocation algorithms, is presented. In the simulation, each user transmits a multidimensional symbol with $M = 4$ using a quadrature amplitude modulation (QAM) mother constellation \cite{nikopour.2013}. We assume that the users' channel gains are ordered, such that, $\norm{\mathbf{H}_1}_F \leq \norm{\mathbf{H}_2}_F \leq \cdots \leq \norm{\mathbf{H}_J}_F$, i.e., the first user has the worst channel gain, while the last user has the best one. Figure \ref{fig:BER}, shows a comparison between the BER of the $6$ users using Max-SR and Max-Min allocation. As expected, with Max-SR allocation, the error probabilities are small, but there is a large gap between the best and the worst user. On the other hand, Max-Min allocation results in overall larger error probabilities when compared to the Max-SR, but the performance gap between users with different channel quality is smaller.
\begin{figure}
\centering {
  \ifCLASSOPTIONtwocolumn 
		\hspace*{-.3cm}\includegraphics[width=1.15\columnwidth]{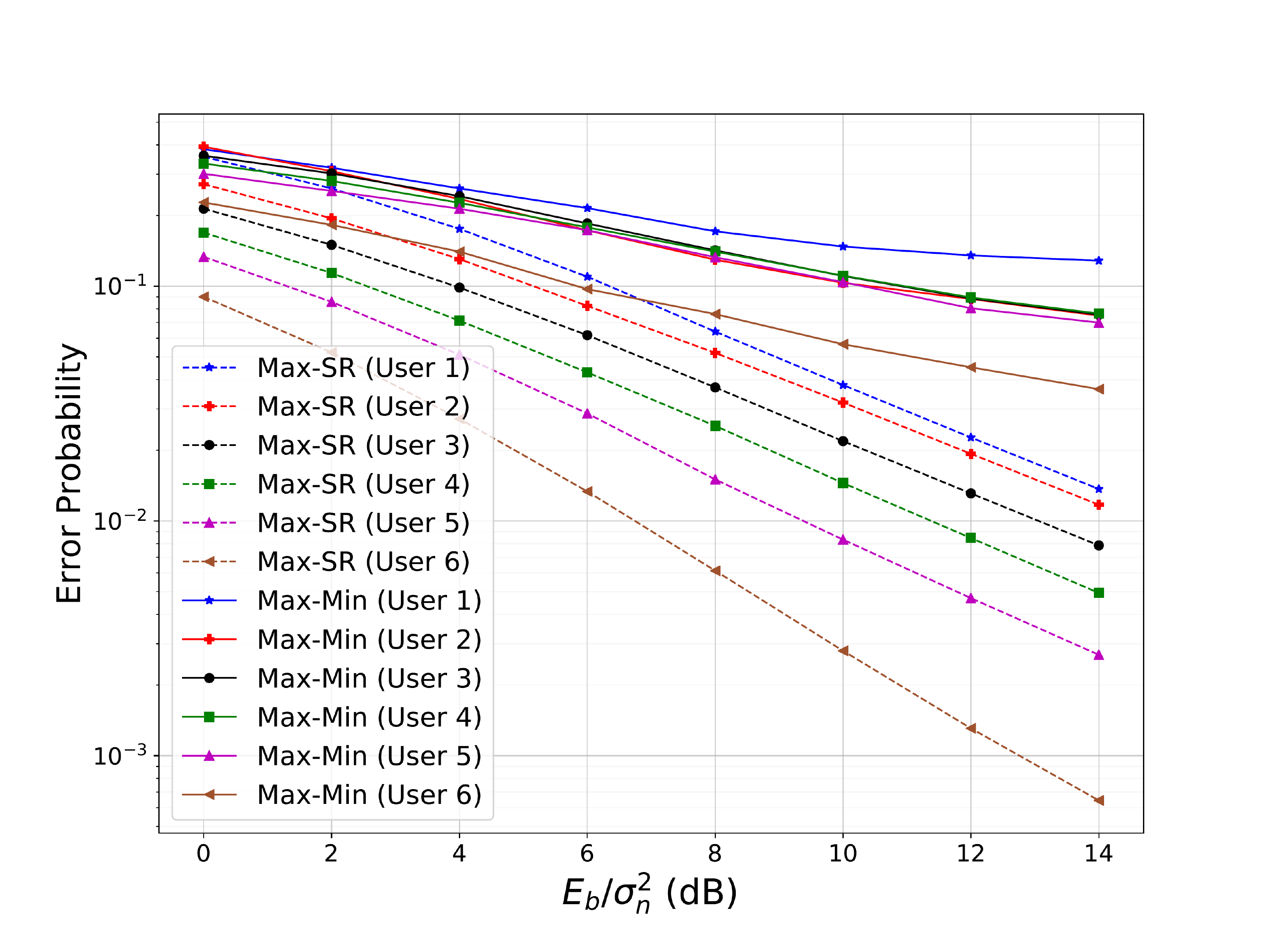}\vspace*{-0.6cm}
  \else
    	\includegraphics[width=0.7\columnwidth]{figs/BER_comparison-eps-converted-to.pdf}
  \fi

\caption{BER comparison for $J=6$, $K=4$, $d_f=3$ and $N=2$. \label{fig:BER}}
}
\end{figure}
\subsection{Performance with Outdated CSI}
\label{sec:outdated_csi}

\begin{figure}[!h]
    \centering {
    \ifCLASSOPTIONtwocolumn 
        \includegraphics[width=0.7\columnwidth]{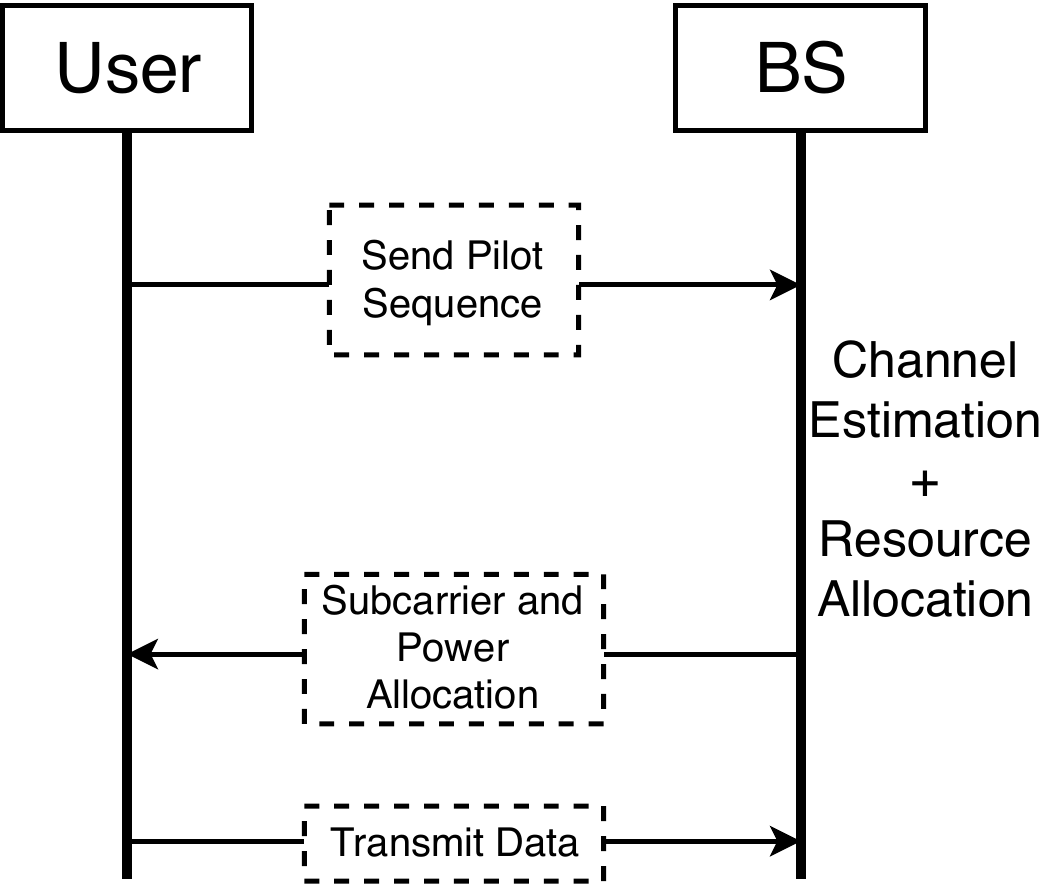}
    \else
        \includegraphics[width=0.4\columnwidth]{figs/architecture-eps-converted-to.pdf}
    \fi
    \caption{Resource allocation procedure}
    \label{fig:rap}
    }
\end{figure}

In this section, we investigate the effect on the performance of the algorithms, under an outdated CSI regime. In the results shown so far, we considered that for every new channel realization the users would send a pilot sequence to the BS, who would run the optimization routine and send the allocations back to the users. This approach requires a large overhead as it requires pilots and allocations to be sent constantly between the users and the BS. Hence, we consider a system where the pilots are sent periodically with period $T$, and the users reuse the same allocation during the period, as illustrated in Figure \ref{fig:rap}.
    
In this experiment, we model the temporal relationship between two successive channel realizations as an i.i.d first-order Gauss-Markov process \cite{Patzold2012MobileChannels}, for each $k \in \mathcal{K}$ and $j \in \mathcal{J}$, given by
\begin{equation}
    h^{(n+1)}_{k, j} = \rho h^{(n+1)}_{k, j} + w_{k, j} , 
    \end{equation}
    where $h^{(n)}_{k,j} \sim \mathcal{CN}(0,1)$ and $w^{(t)}_{k,j} \sim \mathcal{CN}(0,1-\rho^2)$ is the innovation component. Moreover The correlation between successive fading components is given by
\begin{equation}
\label{eq:fading_corr}
    \rho = J_0 \left(2 \pi f_{\max} T_s \right),
\end{equation}
where $f_{\max}$ is the maximum Doppler frequency, $T_s$ is the time between channel updates, and $J_0$ is the Bessel function. Figure \ref{fig:outdated_csi} shows the performance deterioration for $T \in [1, 50]$, $P_{\max} = 10$ dBm, $T_S = 0.01$ s, and four values of $f_{\max}$ resulting in $\rho^2 \in \{ 0.95, 0.62, 0.22, 0.01 \}$.
\begin{figure}
\centering {
  \ifCLASSOPTIONtwocolumn 
		\hspace*{-.3cm}\includegraphics[width=1\columnwidth]{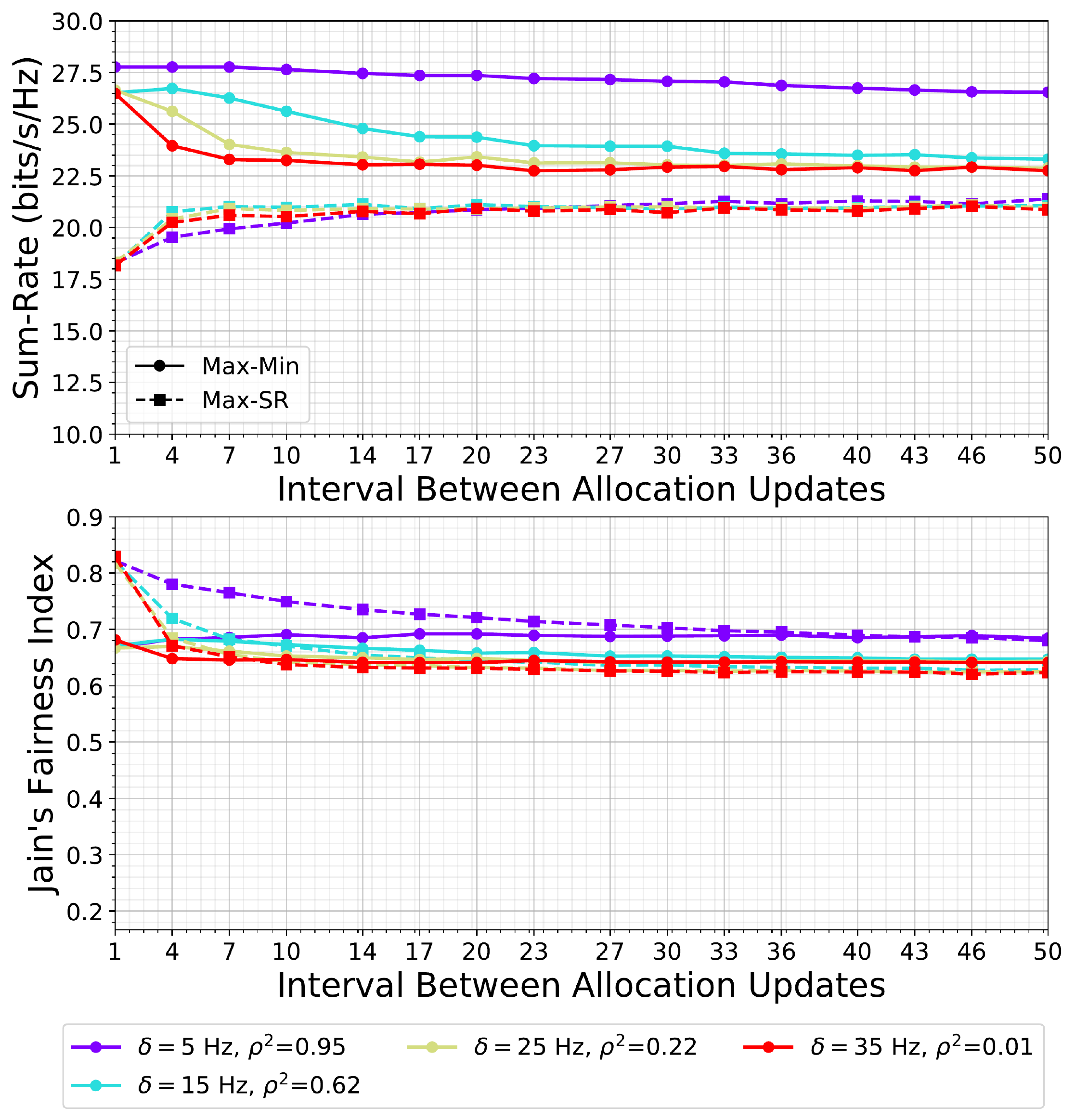}\vspace*{-0.1cm}
  \else
    	\includegraphics[width=0.7\columnwidth]{figs/outdated_csi-eps-converted-to.pdf}
  \fi
    \caption{\label{fig:outdated_csi}Performance of the Max-SR (solid line) and Max-Min (dashed line) allocations for $T \in [1, 50]$ and different values of $f_{\max}$.}} 
    \end{figure}
Tables \ref{tab:outdated_csi_maxsr} and \ref{tab:outdated_csi_maxmin} summarizes the effects of outdated CSI in the performance of Max-SR and Max-Min algorithms respectively. From the table results we conclude that the Max-SR algorithm is more robust to outdated CSI as for the worst case ($T=50$ and $\rho^2 = 0.01$), it still achieves $85\%$ of the sum-rate with $T=1$, while the Max-Min only achieves $75\%$ of the original fairness. Moreover, as in the proposed model the same allocation is reused for subsequent transmissions, and the less correlated the current CSI is with the one used to obtain the allocation (i.e. longer periods between updates), the more it resembles a random allocation, resulting in a performance decrease. For instance, the Max-Min algorithm maximizes the fairness of the system, hence, its fairness index decreases with longer periods. On the other hand, we observe in Figure \ref{fig:outdated_csi} that its sum-rate increases with longer periods, thus, we  conclude that the Max-Min algorithm increases fairness at the expense of the sum-rate. The same does not happen in the Max-SR algorithm as the fairness remains roughly the same under the effect of outdated CSI.

\begin{table}[!t]
\begin{center}
\caption{Performance Deterioration with Outdated CSI (Max-SR) \label{tab:outdated_csi_maxsr}}
\begin{tabular}{| c | c | c |}
\hline %
& Fairness & Sum-Rate \\
\hline
$\rho^2 = 0.95$ & $102.38$\% & $98.81$\%\\
\hline
$\rho^2 = 0.62$ & $96.45$\% & $87.88$\%\\
\hline
$\rho^2 = 0.22$ & $96.47$\% & $85.85$\%\\
\hline
$\rho^2 = 0.01$ & $94.11$\% & $85.89$\%\\
\hline
\end{tabular}
\end{center}
\end{table}
\begin{table}[!t]
\begin{center}
\caption{Performance Deterioration with Outdated CSI (Max-Min) \label{tab:outdated_csi_maxmin}}
\begin{tabular}{| c | c | c |}
\hline 
& Fairness & Sum-Rate \\
\hline
$\rho^2 = 0.95$ & $83.58$\% & $119.73$\%\\
\hline
$\rho^2 = 0.62$ & $77.22$\% & $114.51$\%\\
\hline
$\rho^2 = 0.22$ & $76.66$\% & $112.38$\%\\
\hline
$\rho^2 = 0.01$ & $74.99$\% & $115.34$\%\\
\hline
\end{tabular}
\end{center}
\end{table}

\subsection{Convergence Analysis and Algorithm Complexity}
\label{sec:conv_analysis}
In this section, the convergence rate of the proposed algorithms and their complexity is investigated.
The stopping criteria for the procedure is based on the difference between successive updates of the optimization variables $\mathbf{F}$ and $\mathbf{P}$. When this difference falls below a threshold, $\epsilon_F$ and $\epsilon_P$, respectively, the algorithm has converged. In order to illustrate the convergence, Algorithms \ref{algo:max_sr} and \ref{algo:max_min} are simulated with $P_{\text{max}}^{(j)} = 10$ dBm $\forall \text{ } j \in \mathcal{J}$, with the same channel gains and users location, but with three randomly chosen initial conditions $(\mathbf{F}^{(0)}, \mathbf{P}^{(0)})$. 
Figure \ref{fig:conv_maxsr} shows the convergence of Algorithm \ref{algo:max_sr}. Each iteration in Figure \ref{fig:conv_maxsr} consists of a full cycle of updates. 
\begin{figure}
\centering {
  \ifCLASSOPTIONtwocolumn 
		\hspace*{-.3cm}\includegraphics[width=1.1\columnwidth]{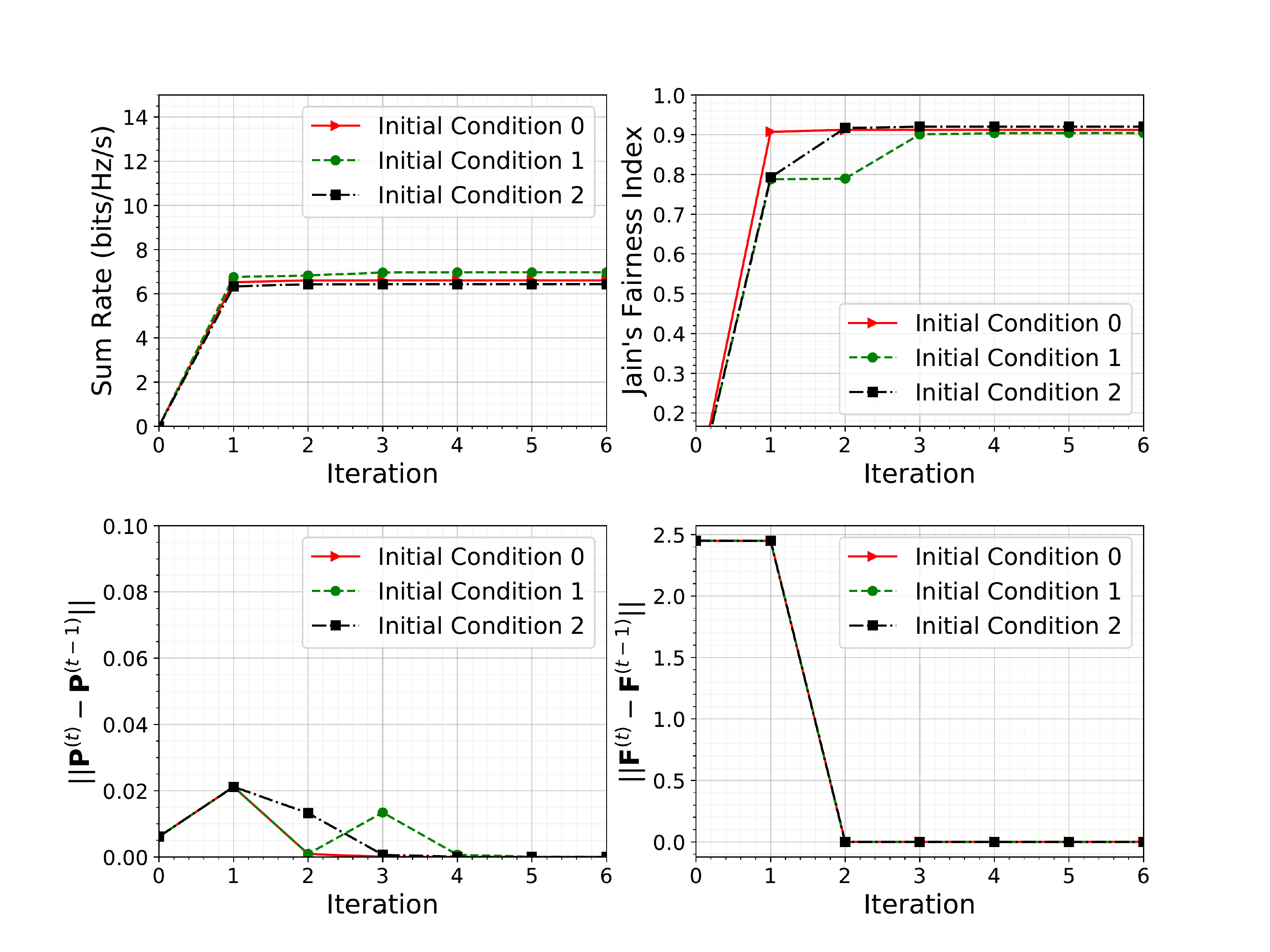}
  \else
    	\includegraphics[width=0.7\columnwidth]{figs/maxsr_convergence-eps-converted-to.pdf}
  \fi

\caption{Convergence of the $\textbf{P}_{\textbf{Max-SR}}$, with maximum transmit power of $10$ dBm, algorithm for three different initial conditions. \label{fig:conv_maxsr}}
}
\end{figure}
In Figure \ref{fig:conv_maxmin}, the convergence of Algorithm \ref{algo:max_min} is shown. The algorithm converges in five steps or less for all three initial conditions. Furthermore, 
Algorithm \ref{algo:max_sr} consistently converges to a higher sum-rate than Algorithm \ref{algo:max_min}, while Algorithm \ref{algo:max_min} converges to a higher fairness index as expected.
\begin{figure}
\centering {
  \ifCLASSOPTIONtwocolumn 
	\hspace*{-.3cm}\includegraphics[width=1.1\columnwidth]{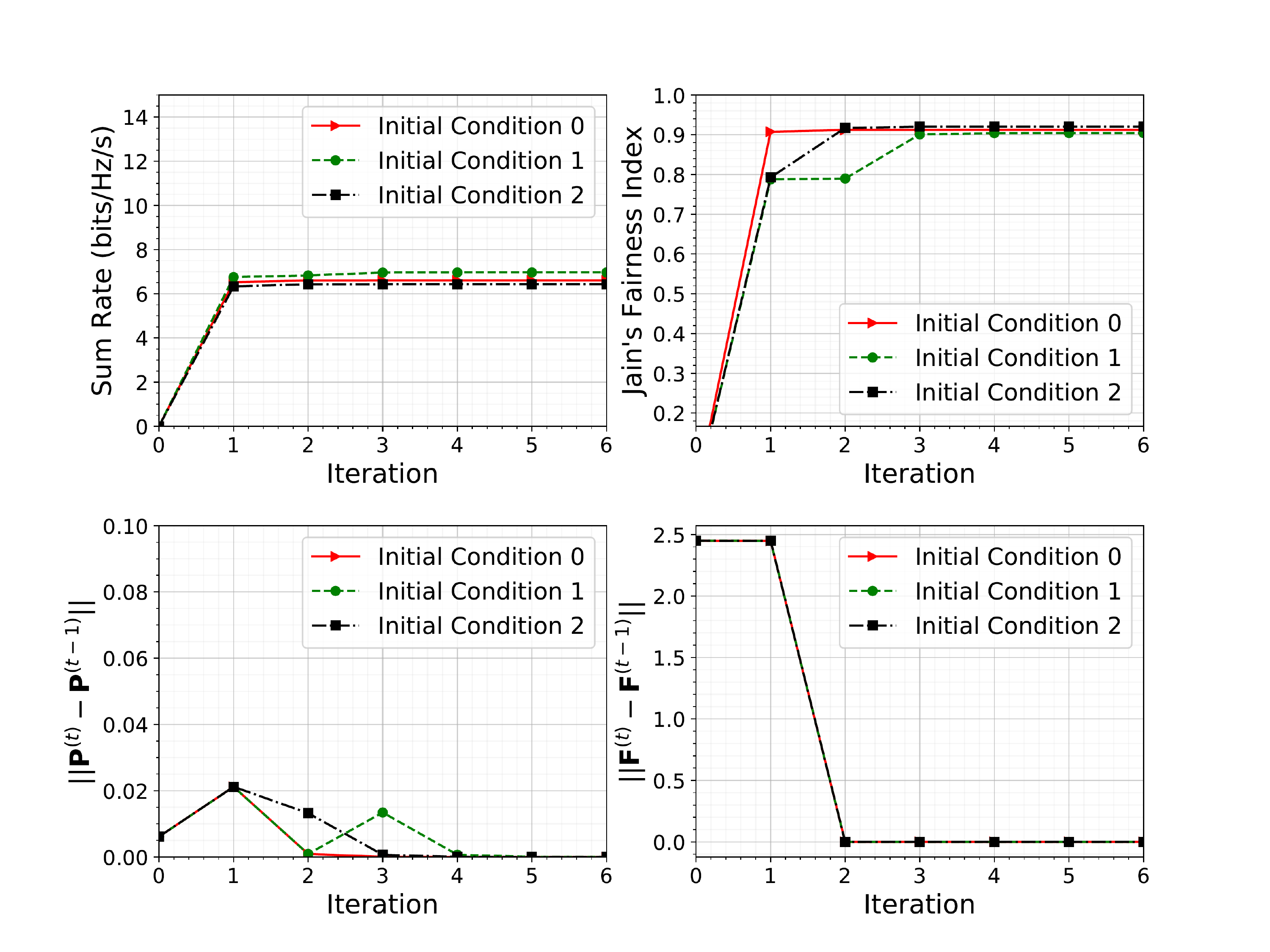}
  \else
    \includegraphics[width=.7\columnwidth]{figs/maxmin_convergence-eps-converted-to.pdf}
  \fi

\caption{Convergence of the $\textbf{P}_{\textbf{Max-Min}}$ algorithm, with maximum transmit power of $10$ dBm, for three different initial conditions. \label{fig:conv_maxmin}}
}
\end{figure}

\subsubsection{Algorithm Complexity}
Each update of the Max-SR algorithm involves solving a convex optimization problem, namely $\mathbf{P}_{\textbf{Max-SR}}^{(\mathbf{F})}$ for the subcarrier allocation update and  $\mathbf{P}_{\textbf{Max-SR}}^{(\mathbf{P})}$ for the power allocation update. We consider that both update steps are solved using a primal-dual interior point algorithm \cite{nocedal.2006} with a logarithmic barrier function.
\begin{definition}[$\epsilon$-optimal solution]
Let
\begin{eqnarray}
&\underset{\mathbf{x}}{\max}& f(\mathbf{x}) \quad \textbf{s.t.} \quad \mathbf{x} \in \mathcal{X}, \nonumber
\end{eqnarray}
where $f(\mathbf{x})$ is a concave function and $\mathcal{X}$ is a convex set be an optimization problem such that $\mathbf{x}^*$ is the unique maximizer to the problem. The vector $\mathbf{x}^\prime \in \mathcal{X}$ is an $\epsilon$-optimal solution to the problem if
\begin{equation}
    f(\mathbf{x}^*) - f(\mathbf{x}^\prime) \leq \epsilon.
\end{equation}
\end{definition}
As derived in \cite{nesterov.1994}, the number of steps required to obtain an $\epsilon$-optimal solution to a convex optimization problem using the interior point algorithm with logarithmic barrier function is $\mathcal{O}\left( \sqrt{n} \log_2 \left(\frac{1}{\epsilon}\right)\right)$, where $n$ is the number of inequality constraints. The problems solved in the update steps $\mathbf{P}_{\textbf{Max-SR}}^{(\mathbf{F})}$ and $\mathbf{P}_{\textbf{Max-SR}}^{(\mathbf{P})}$ have $2J + K + KJ$ and $J$ inequality constraints, respectively. Therefore, the complexity of each subcarrier allocation update is $\mathcal{O}\left( \sqrt{2J + K + KJ} \log_2 \left(\frac{1}{\epsilon}\right)\right)$ and the complexity of each power allocation update is $\mathcal{O}\left( \sqrt{J} \log_2 \left(\frac{1}{\epsilon}\right)\right)$. In order to establish the total number of iterations required for the convergence of the algorithm, we use the result from Theorem 3.1 in \cite{hong.2017}. Let
\begin{equation*}
    \underset{\mathbf{x}}{\max} \quad f(\mathbf{x}) \quad \textbf{s.t.} \quad \mathbf{x} \in \mathcal{X}, \nonumber
\end{equation*}
where $f(\mathbf{x})$ might be non-concave and non-smooth, be a generic non-convex optimization problem, and $\mathbf{x}^* \in \mathcal{X}^*$, where $\mathcal{X}^*$ is the set of the problem's stationary points. Then, the optimality gap after the $t$-th cyclic update of the BLSM algorithm is given by
\begin{equation}
    \Delta_t = f\left(\mathbf{x}^*\right) - f\left(\mathbf{x}^{(t)}\right) \leq \frac{c}{t},
\end{equation}
where $c$ is a constant. Therefore, to obtain an $\epsilon$-optimal solution, $\mathcal{O}\left(\frac{1}{\epsilon}\right)$ update steps are necessary. The same results apply to the Max-Min algorithm as the same number of constraints are involved to solve the power allocation and subcarrier allocation update problems. Both algorithms complexity are summarized on Table \ref{tab:complexity}. 

\begin{table}[!t]
\begin{center}
\caption{Complexity of the Max-SR and Max-Min algorithms \label{tab:complexity}}
\begin{tabular}{| c | c |}
\hline
Procedure & Number of Steps \\
\hline
Subcarrier Allocation Update & $\mathcal{O}\left( \sqrt{2J + K + KJ} \log_2 \left(\frac{1}{\epsilon}\right)\right)$\\
Power Allocation Update & $\mathcal{O}\left( \sqrt{J} \log_2 \left(\frac{1}{\epsilon}\right)\right)$  \\
Total Number of Updates & $\mathcal{O} \left( \frac{1}{\epsilon} \right)$ \\
\hline
\end{tabular}
\end{center}
\end{table}
\section{Conclusions}
\label{sec:conclusions}
In this paper, two joint channel and power allocation algorithms are proposed: the Max-SR and the Max-Min algorithms. The former aims for sum-rate  maximization, while the latter aims for maximizing fairness. The BSUM framework is employed to obtain algorithms converging to locally optimal points of the relaxed problems. We compare the performance of the Max-SR and the Max-Min algorithms with the ones proposed in \cite{dabiri.2018}. The results show that the Max-SR algorithm has better performance on the sum-rate sense, while the Max-Min has better performance on the fairness sense. Moreover, a numerical analysis of the convergence of the algorithms is presented. Finally, we derive the worst-case time complexity of both algorithms.
The results show that the Max-SR consistently achieve a better sum-rate, while the Max-Min achieves better fairness. 
Furthermore, there is a tradeoff between the fairness and the sum-rate. For future works, we intend to investigate new algorithms that can achieve a compromise between sum-rate and fairness. 

\appendices

\section{Proof of Theorem \ref{th:np_hard}} \label{app:np_proof}
To prove that both problems are NP-hard, we show that the subcarrier assignment subproblem can be reduced in polynomial time to the hypergraph assignment problem (HAP) which is shown to be NP-hard in \cite{borndofer.2015}.

We start by briefly introducing the HAP. The HAP takes as input a bipartite graph $\mathcal{G} = \mathcal{V} \cup \mathcal{U}$ such that $\mathcal{V} \cap \mathcal{U} = \emptyset$, a set of hyperedges $\mathcal{E}$, and a cost function $c:\mathcal{E}\rightarrow \mathbb{R}$. In addition to that, we have that $\left\vert e \cap \mathcal{V} \right\vert \geq 1 \text{ and }  \left\vert e \cap \mathcal{U} \right\vert \geq 1 \text{ } \forall \text{ } e \in \mathcal{E}$. 
A hyperassignment is a set $\mathcal{H} \subseteq \mathcal{E}$ such that every element $v \in \mathcal{V} \cup \mathcal{U}$ appears exactly once in $\mathcal{H}$. The output of the HAP is an optimal hyperassignment $\mathcal{H}^* = \min \{c(\mathcal{H}) \vert \mathcal{H} \text{ is an hyperassignment of } \mathcal{G} \}$.

Now consider the problem $\mathbf{P}_{\textbf{Max-SR}}$ in the HAP context, with a fixed power allocation matrix $\mathbf{P}$. Let $\mathcal{U} = \{SC_1, \cdots, SC_K\}$ and $\mathcal{V} = \{U_{1,1} \cdots U_{1,N}, U_{2,1}, \cdots U_{2, N}, \cdots U_{J,N}\}$ be the vertex set of available subcarriers and allocated subcarriers, respectively. Notice that $SC_k$ denotes the $k$-th available subcarrier, while $U_{j,i}$ denotes the $i$-th subcarrier allocated to user $j$, with $1 \leq i \leq N$. In this context, a hyperassignment determines which subcarriers are allocated to each user. For instance, a hyperedge $e = \{SC_1, U_{1,1}, U_{2,2}\}$ indicates that users $1$ and $2$ are allocated to the first subcarrier. As at most $N$ subcarriers can be allocated to each user, every hyperassignment satisfies \eqref{eq:c_max_res_per_user}. Notice that the subcarrier allocation matrix $\mathbf{F}$ corresponds to the hyperedge incidence matrix of the hypergraph, for instance, the hyperedge $e = \{SC_1, U_{1,1}, U_{2,2}\}$ would result in $\mathbf{f}_1 = \begin{bmatrix}
1 & 1 & 0 & \cdots & 0
\end{bmatrix}$, where $\mathbf{f}_{1}$ is the first row of $\mathbf{F}$. Furthermore, consider the hyperedge set $\mathcal{E} = \{e \in 2^{\mathcal{U} \cup \mathcal{V}}\vert \quad 1 < |e| \leq d_f+1\}$ and let $F_{\mathcal{H}} \in \mathbb{B}^{K \times J}$ be the incidence matrix of a hypermatch $\mathcal{H} \subseteq \mathcal{E}$. As $\vert e \vert \leq d_f+1 \text{ } \forall \text{ } e \in \mathcal{E}$ constraint \eqref{eq:c_max_user_per_res} is always satisfied. Furthermore, the cost function is given as
\begin{equation*}
    c(\mathcal{H}) = \underset{k \in \mathcal{K}}{\sum} \ln \left( 1 + \frac{\underset{j \in \mathcal{J}}{\sum} |h_{k, j}|^2 f_{k, j} p_{k, j}}{\sigma_n^2} \right).
\end{equation*}
So, solving $\mathbf{P}_{\textbf{Max-SR}}$ with $\mathbf{P}$ fixed, is equivalent to solving an HAP, therefore, $\mathbf{P}_{\textbf{Max-SR}}$ is NP-hard.

Finally, we also conclude $\mathbf{P}_{\textbf{Max-Min}}$ is NP-hard using the same HAP formulation, but using the cost function
\begin{equation*}
    c(\mathcal{H}) = \underset{j \in \mathcal{J}}{\min} \underset{k \in \mathcal{K}}{\sum} \ln \left( 1 + \frac{|h_{k,j}|^2 f_{k,j} p_{k,j}}{\sigma_n^2 + \overset{j-1}{\underset{i = 1}{\sum}}|h_{k,i}|^2 f_{k,i} p_{k,i}}  \right).
\end{equation*}

\ifCLASSOPTIONcaptionsoff
  \newpage
\fi
%
{\footnotesize{
\bibliographystyle{IEEEtran}
\bibliography{IEEEabrv,references}{}}
}
\end{document}